\documentclass[letterpaper, 10 pt, conference]{ieeeconf}  
\IEEEoverridecommandlockouts                           
\overrideIEEEmargins

\usepackage[usenames,dvipsnames]{xcolor}
\usepackage{amsfonts}
\usepackage[sort]{cite}
\usepackage{amsmath} 
\usepackage{amssymb}  
\usepackage{graphicx}
\usepackage{epsfig}
\usepackage{epstopdf}
\usepackage[normal]{subfigure}
\usepackage{mathtools}
\usepackage{etoolbox}
\usepackage{lipsum}
\usepackage{caption}
\usepackage{comment}
\usepackage{soul}
\usepackage[colorlinks=true]{hyperref}
\usepackage[normal]{subfigure}
\usepackage{verbatim}

\usepackage{algorithm,algorithmic}

\def\cov{\mathop{\mathrm{cov}}}  
\def\diag{\mathop{\mathrm{diag}}}  
\def\trace{\mathop{\mathrm{tr}}}  


\newcommand{\Cov}{\mathrm{cov}}

\let\bbordermatrix\bordermatrix
\patchcmd{\bbordermatrix}{8.75}{4.75}{}{}
\patchcmd{\bbordermatrix}{\left(}{\left[}{}{}
\patchcmd{\bbordermatrix}{\right)}{\right]}{}{}


\numberwithin{equation}{section}

\hypersetup{
     colorlinks = true,
     linkcolor = [rgb]{0,0,1},
     anchorcolor = [rgb]{0,0,1},
     citecolor = [rgb]{0.9,0.5,0},
     filecolor = [rgb]{0,0,1},
     pagecolor = [rgb]{1,1,0},
     urlcolor = [rgb]{0.9,0.5,0},
     bookmarks,
        bookmarksopen = true,
        bookmarksnumbered = true,
        breaklinks = true,
        linktocpage,
        pagebackref,
        colorlinks = true,
        linkcolor = [rgb]{0.2,0.6,0.2},
        urlcolor  = [rgb]{0,0,1},
        citecolor = [rgb]{0.9,0.5,0},
        anchorcolor = [rgb]{0.2,0.6,0.2},
        hyperindex = true,
        hyperfigures
}

\newcommand{\T}{^{\mbox{\tiny T}}}

\newtheorem{theorem}{\bf Theorem}
\newtheorem{lemma}{\bf Lemma}
\newtheorem{definition}{\bf Definition}

\newtheorem{remark}{\bf Remark}
\newtheorem{problem}{\bf Problem}

\newtheorem{example}{\bf Example}

\newcommand{\sr}{\stackrel}

\newcommand{\rar}{\rightarrow}

\newcommand{\tri}{\sr{\bigtriangleup}{=}}


\newcommand{\noi}{\noindent}

\newcommand{\be}{\begin{equation}}
\newcommand{\ee}{\end{equation}}
\newcommand{\bea}{\begin{eqnarray}}
\newcommand{\eea}{\end{eqnarray}}
\newcommand{\bes}{\begin{eqnarray*}}
\newcommand{\ees}{\end{eqnarray*}}
\newcommand{\ba}{\begin{align}}
\newcommand{\ea}{\end{align}}

\newcommand{\bfi}{\begin{figure}}
\newcommand{\bfit}{\begin{figure}[t]}
\newcommand{\bfib}{\begin{figure}[b]}
\newcommand{\bfih}{\begin{figure}[h]}
\newcommand{\bfip}{\begin{figure}[p]}
\newcommand{\efi}{\end{figure}}

\newcommand{\bi}{\begin{itemize}}
\newcommand{\ei}{\end{itemize}}
\newcommand{\ben}{\begin{enumerate}}
\newcommand{\een}{\end{enumerate}}

\newcommand{\bp}{\begin{problem}}
\newcommand{\ep}{\end{problem}}
\newcommand{\hso}{\hspace{.1in}}
\newcommand{\hst}{\hspace{.2in}}

\newenvironment{list4}{
  \begin{list}{$\bullet$}{
      \setlength{\itemsep}{0.05cm}
      \setlength{\labelsep}{0.2cm}
      \setlength{\labelwidth}{0.3cm}
      \setlength{\parsep}{0in}
      \setlength{\parskip}{0in}
      \setlength{\topsep}{0in}
      \setlength{\partopsep}{0in}
      \setlength{\leftmargin}{0.17in}}}
      {\end{list}}
      
 \newenvironment{list5}{
  \begin{list}{$\bullet$}{
      \setlength{\itemsep}{0.05cm}
      \setlength{\labelsep}{0.2cm}
      \setlength{\labelwidth}{0.6cm}
      \setlength{\parsep}{0in}
      \setlength{\parskip}{0.0in}
      \setlength{\topsep}{0in}
      \setlength{\partopsep}{0in}
      \setlength{\leftmargin}{0.3in}}}
      {\end{list}}



%
%
%
%
\begin{document}

\title{Joint Nonanticipative Rate Distortion Function for a Tuple of Random Processes with Individual Fidelity Criteria}

\author{Charalambos D. Charalambous and Evagoras Stylianou   \\ 
\thanks{C. D. Charalambous is with the Department of Electrical and Computer Engineering, University of Cyprus, Nicosia, Cyprus. Emails: {\tt  chadcha@ucy.ac.cy}.}
\thanks{Evagoras Stylianou is  with the Department of Electrical and Computer Engineering, Technical University of Munich.      Email: {\tt evagoras.stylianou@tum.de}.}
\thanks {This work was supported in parts by  the European Regional Development Fund and the Republic of Cyprus through the Research Promotion Foundation Projects EXCELLENCE/1216/0365 and  EXCELLENCE/1216/0296.}
}

\maketitle
%
%
%
%
\begin{abstract} 
The  joint nonanticipative rate distortion function (NRDF) for a tuple of random processes with individual fidelity criteria is considered. Structural properties of optimal test channel distributions are derived. Further, for the application example of  the joint NRDF of a  tuple of jointly multivariate Gaussian Markov {processes} with individual  square-error fidelity criteria,  a   realization of the reproduction processes  which induces the optimal test channel distribution is derived,  and the corresponding  joint NRDF is characterized. The analysis of the simplest example,  of a  tuple of  scalar correlated Markov  processes,  illustrates many of the challenging aspects of such problems. 
\end{abstract}

%
%
%
%
\section{Introduction}\label{sec:intro}
Presently of much  interest  in information theory and in the theory of information transmission for control systems applications,  is the  Gorbunov and Pinsker  \cite{gorbunov-pinsker1973}  nonanticipatory epsilon entropy and message generation rates of a discrete-time random process $X^n \tri \{X_1, X_2, \ldots, X_n\}$,  $X_t(\omega),  t =1, 2, \ldots, n,  \omega \in \Omega$ taking values in ${\mathbb X}$,  with     joint probability distribution ${\bf P}_{X^n}$,   subject to   a fidelity criterion,   
\begin{align}
\frac{1}{n} {\bf E} \Big\{d_{n}(X^n, Y^n)  \Big\}\leq \epsilon,   \hso d_n(x^n,y^n) \in [0,\infty)   \label{dis_1}
\end{align}
of  reconstructing $X^n $ by another random process  $Y^n \tri \{Y_1, Y_2, \ldots, Y_n\}$,  $Y_t(\omega),  t =1, 2, \ldots, n,  \omega \in \Omega$ taking values in ${\mathbb Y}\subseteq {\mathbb X}$. Nonanticipatory entropy is often  described under the designated name nonanticipative or sequential rate distortion function (RDF) \cite{tatikonda:2004,derpich:2012,charalambous-stavrou-ahmed2014ieeetac,tanaka:2017,stavrou-charalambous-charalambous-loyka2018siam}.   However,  the  name for this quantity does alter the situation,  that it corresponds to    a variant of Shannon's   \cite{shannon1959} ``rate of creating information with respect to a fidelity'',  often designated by the name rate distortion function (RDF)  \cite{gallager1968}. Shannon's RDF is the information theoretic definition of  the operational definition,  ``the   optimal performance theoretically attainable'' (OPTA)   (i.e.,  the infimum of rates of creating information) by noncausal codes subject  to a fidelity.  

Gorbunov's and Pinsker's  \cite{gorbunov-pinsker1973}  nonanticipatory epsilon entropy of a process $X^n$ with distribution ${\bf P}_{X^n}$,   is defined by 
\begin{align}
&R_{X^n}(\epsilon)   \tri   \inf_{{\bf P}_{X^n, Y^n} }    \int\log\Big(\frac{ {\bf P}_{Y^n, X^n}}{{\bf P}_{Y^n} \times  { \bf P}_{Y^n}  }\Big)  {\bf P}_{X^n,Y^n}   \label{gp_1}  \\
&\mbox{subject to the average disrortion (\ref{dis_1}) and  }  \\
& \mbox{causality,  ${\bf P}_{Y^t|X^n} = {\bf P}_{Y^t|X^t},  t=1, \ldots n$}\label{gp_2}
\end{align}
where the infimum is taken over all joint distributions ${\bf P}_{X^n, Y^n}$ such that the ${\mathbb X}$ marginal distribution is the fixed distribution ${\bf P}_{X^n}$, and the fidelity and causality are satisfied. 
Shannon's RDF corresponds to $R_{X^n}(\epsilon)$ without the causality restriction  (\ref{gp_2}).   ${\bf P}_{Y^n|X^n}$ and ${\bf P}_{X^n|Y^n}$  are  known as,  the forward test channel  and  the backward test channel, respectively,    of reconstructing $X^n$ by $Y^n$ subject to fidelity (\ref{dis_1}). \\
Over the years,  $R_{X^n}(\epsilon)$ is applied in  the following areas.

\begin{list4}
\item[1)] Quantification of the rate loss of the OPTA by causal codes \cite{neuhoff:1982} and zero-delay codes \cite{gaarder-slepian1982,linder-lagosi2001}, with respect to noncausal codes,  for Gaussian Markov processes $X^n$  with square-error fidelity  \cite{derpich:2012}.  The construction of  causal and zero-delay codes \cite{stavrou-ostergaard-charalambos:2018}, based on subtractive dither with uniform scalar quantization (SDUSQ) \cite{zamir-feder1996}. 

\item[2)] Necessary and sufficient conditions to stabilize unstable linear Gaussian control systems over finite rate, noiseless or noisy,  communications channels,  and to design controllers, encoders and decoders subject to finite rate constraints \cite{tatikonda:2004,nair-evans2004}.

\item[3)] Synthesize  recursive, causal filters of Gaussian Markov processes subject to square-error fidelity \cite{charalambous-stavrou-ahmed2014ieeetac,stavrou-charalambous-charalambous-loyka2018siam}. 
\end{list4}
However,  the complete characterization  of the multivariate Gaussian Markov process $X^n$ with square-error fidelity, i.e., the specification of the realization of $Y^n$ and its structural properties,  which induces the optimal test channel, was only recently completed in \cite{charalambous2020structural}, although the problem was posed and solved for the scalar Gaussian Markov source by Gorbunov and  Pinsker in \cite{gorbunov-pinsker1974}.

In this paper we  formulate and analyze  the  nonanticipatory epsilon entropy,  designated  henceforth  by the name joint nonanticipative rate distortion function (NRDF),    of a tuple of processes $(X_1^n, X_2^n)$,   when each process is assigned an individual fidelity criterion, as shown in Fig.~\ref{fig:jointRDF}.  

Our interest in this problem is motivated by the   classical joint compression problem of a tuple of jointly independent and identically distributed processes $(X_1^n, X_2^n)$ with individual fidelity criteria,  introduced in \cite{xiao}.  As pointed out in \cite{xiao}, contrary to the classical joint RDF of a tuple   process,   viewed as a single process,  $X^n=(X_1^n, X_2^n)$, with a  single fidelity criterion assigned to $X^n$, the classical  joint RDF with individual fidelity criteria    aims  at  the design of encoders with cooperation, and hence fundamentally different from the former.  Indeed,  inherent in the optimal test channel distribution is the encoder cooperation,  which  is absent in the classical joint RDF with a single fidelity criterion. The additional level of  complexity of the encoder cooperation  is   demonstrated  in   \cite{xiao},  through the calculation of the classical joint RDF of a tuple of scalar,  jointly independent and identically distributed Gaussian RVs with individual square-error distortion criteria. The  recent treatment  in \cite{stylianou2021joint},  of the classical joint RDF for a tuple of  multivariate jointly independent and identically distributed Gaussian processes with individual square-error distortion criteria,   demonstrated additional  challenges, which are due to  the consideration of the  multivariate analog of   \cite{xiao}, and which are absent in the classical water-filling solution of the analogous RDF with a single fidelity criterion. Another application of the  classical joint RDF of a tuple of processes with individual fidelity criteria is the  Gray and Wyner \cite{gray1974source} source coding for a simple network, where this joint RDF is needed to characterize the rate region \cite{charalambous2020characterization}.

\begin{figure}[t]
  \centering
  \includegraphics[width=0.99\columnwidth]{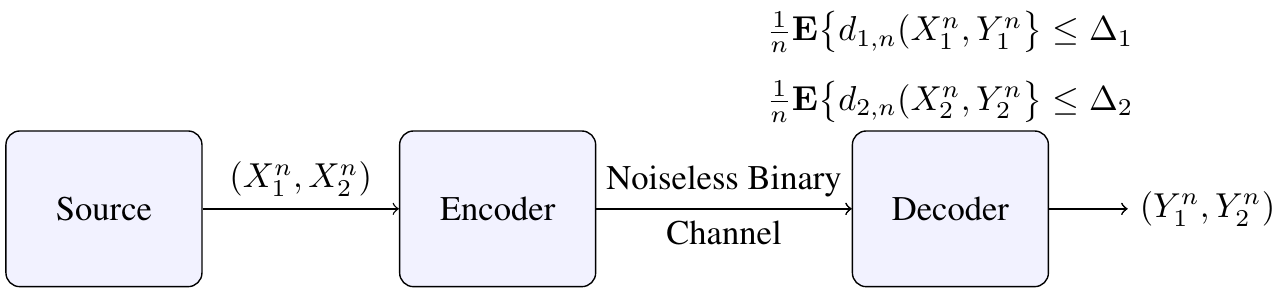}
\caption{Lossy Compression of correlated sources with individual distortion criteria.}
  \label{fig:jointRDF}
\end{figure}

\subsection{Problem Statement and Main Results}

We  consider a tuple of random processes   $X_i^n \tri \{X_{i,1}, \ldots, X_{i,n}\}, i=1,2$,
\begin{align}
X_{i,t} : \Omega \rar {\mathbb X}_i, \hso  t=1,\ldots, n, \hso  i=1,2
\end{align}
where ${\mathbb X}_i, i=1,2$ are metric spaces, 
 with corresponding tuple of reproduction processes  $Y_i^n \tri \{Y_{i,1}, \ldots, Y_{i,n}\}, i=1,2$, 
\begin{align} 
 Y_{i,t} : \Omega \rar {\mathbb Y}_i, \hso  t=1,\ldots, n, \hso i=1,2
 \end{align} 
where ${\mathbb Y}_i \subseteq {\mathbb X}_i, i=1,2$. The reproduction tuple satisfies  two individual fidelity criteria of reconstructing $X_i^n$ by $Y_i^n$, $i=1,2$, defined by  the measurable functions $d_{i,n}: {\mathbb X}_i^n \times {\mathbb Y}_i^n \rar [0,\infty), i=1,2$,
\begin{align}
&\frac{1}{n} {\bf E} \Big\{d_{i,n}(X_i^n, Y_i^n)  \Big\}\leq \Delta_i,  \hst i=1,2,  \label{AR_M_4}\\
 & d_{i,n}(x_i^n, x_i^n)=  \sum_{t=1}^n \rho_t(x_t, y_t), \hso    i=1,2.
\end{align}
\noi Given the fixed joint distribution ${\bf P}_{X_1^n, X_2^n}$ of the tuple $(X_1^n, X_2^n)$,  we define the joint distribution of $(X_1^n, X_2^n, Y_1^n, Y_2^n)$,  using the forward  ${\bf P}_{Y_1^n,Y_2^n|X_1^n,X_2^n}$ and  backward  ${\bf P}_{X_1^n,X_2^n|Y_1^n,Y_2^n}$  test channel distributions     by 
\begin{align}
{\bf P}_{X_1^n,X_2^n,Y_1^n,Y_2^n}=&{\bf P}_{Y_1^n,Y_2^n|X_1^n,X_2^n}\otimes{\bf P}_{X_1^n, X_2^n} \label{jdf}\\
=&{\bf P}_{X_1^n,X_2^n|Y_1^n,Y_2^n}\otimes{\bf P}_{Y_1^n, Y_2^n}
\end{align}
where  $\otimes$ denotes  the compound probability operator. \\
Let  $I(X_1^nX_2^n; Y_1^n,Y_2^n)$ denote  the mutual information between the tuple $(X_1^n, X_2^n)$ and its reproduction tuple  $(Y_1^n, Y_2^n)$,   defined by \cite{berger:1971}
\begin{align}
I(X_1^n, X_2^n;&Y_1^n,Y_2^n) \tri  \int\log\Big(\frac{ {\bf P}_{X_1^n,X_2^n, Y_1^n,Y_2^n}}{{\bf P}_{X_1^n,X_2^n}\times{\bf P}_{Y_1^n,Y_2^n}}\Big) \nonumber \\
& {\bf P}_{X_1^n,X_2^n,Y_1^n,Y_2^n} .    \label{MI_1_a} 
\end{align}
Let  ${\cal Q}_{X_1^n, X_2^n}^S(\Delta_1, \Delta_2)$ denote the fidelity constraint set of the two individual distortions,   defined by 
\begin{align}
&{\cal Q}_{X_1^n, X_2^n}^S(\Delta_1, \Delta_2)\tri   \Big\{{\bf P}_{Y_1^n,Y_2^n|X_1^n,X_2^n} \Big| \mbox{\: the ${\mathbb X}_1^n \times {\mathbb X}_2^n-$  }    \nonumber \\
&\mbox{marginal of the joint dist. (\ref{jdf}) is  ${\bf P}_{X_1^n, X_2^n}$, and  } \nonumber \\
&\hst  \frac{1}{n}{\bf E}\Big\{d_{i,n}(X_i^n,Y_i^n)\Big\}\leq{\Delta_i}, \: i=1,2 \Big\} .
\end{align} 

\noi {\it The  joint NRDF  for the tuple  $(X_1^n, X_2^n)$ with individual fidelity criteria} is defined by 
\begin{align}R_{X_1^n, X_2^n}(\Delta_1, \Delta_2)\tri& \inf_{{\cal Q}_{X_1^n, X_2^n}^S(\Delta_1, \Delta_2): \; \mbox{(C) holds} } \Big\{ \nonumber \\
& I(X_1^n, X_2^n;Y_1^n,Y_2^n)\Big\}  \label{eq.1_pw}
\end{align}
where the infimum is taken over all joint distributions 
${\bf P}_{X_1^n,X_2^n,Y_1^n,Y_2^n}\in{\cal Q}_{X_1^n, X_2^n}^S(\Delta_1, \Delta_2) $ such that  following condition holds:\\


\noi {\bf (C)} for each $t\in \{1,\ldots, n\}$, the process $(Y_1^t, Y_2^t)$ is conditionally independent of $(X_{1,t+1}^n, X_{2,t+1}^2)$ conditioned on $(X_1^t,X_2^t)$, that is,  
\begin{align}
{\bf P}_{Y_1^t,Y_2^t|X_1^n,X_2^n}={\bf P}_{Y_1^t, Y_2^t|X_1^t,X_2^t}, \hso t=1, \ldots, n-1 \label{CI_1}
\end{align}
 equivalently expressed as a Markov chain ($\leftrightarrow$) 
\begin{align}
(X_{1,t+1}^n, X_{2,t+1}^n) \leftrightarrow {(}X_1^t,X_2^t) \leftrightarrow (Y_1^t,Y_2^t), \; t=1, \ldots, n-1. \nonumber 
\end{align}
\noi {\it Conditional independence  (\ref{CI_1})} is a   ``causality condition'' of the reproduction distribution.\\
{The main result of  the first part of the paper, is  }

(R1)  the structural properties of optimal test channel distributions, and realizations of the reproduction processes. 
 
 In the second part of the paper, we analyze   the joint NRDF ${R}_{X_1^n,X_2^n}(\Delta_1,\Delta_2)$, for a tuple of jointly  multivariate Gaussian Markov processes  $(X_1^n, X_2^n)$,  and  two  square-error distortion  functions,  defined by 
\begin{align}
&X_{i,t}: \Omega  \rar {\mathbb X}_i \tri    {\mathbb R}^{p_i},  \hso t=1, \ldots, n,\label{prob_1}
 \\
 & {\bf P}_{X_{1,t},X_{2,t}|X_1^{t-1}, X_2^{t-1}}={\bf P}_{X_{1,t},X_{2,t}|X_{1,t-1}, X_{2, t-1}}, \\
& (X_{1,t}, X_{2,t}) \in G(0, Q_{(X_{1,t}, X_{2,t}})), \label{prob_2}\\
&Q_{(X_{1,t}, X_{2,t})} \tri   {\mathbf E} \bigg\{ \begin{pmatrix} X_{1,t} \\ X_{2,t}  \end{pmatrix}  \begin{pmatrix}X_{1,t} \\ X_{2,t}  \end{pmatrix}\T \bigg\}\\
& Y_{i,t}: \Omega  \rar  {\mathbb Y}_i \tri    {\mathbb R}^{p_i}, \hso i=1,2,\label{prob_8} \\
& d_{i,n} (x_i^n, y_i^n)= \frac{1}{n} \sum_{t=1}^n ||x_{i,t}-y_{i,t}||_{{\mathbb R}^{p_{i}}}^2,\;\;\;i=1,2.  \label{prob_9}
\end{align}
Here  
$X \in G(0,Q_X)$ means $X$ is a Gaussian RV,  with zero mean and 
 covariance matrix $Q_X\succeq 0$.\\
Our main contributions include,  

(R2) realizations of optimal reproduction process $(Y_2^n, Y_2^n)$, and its structural properties, and 

(R3) characterization of  joint NRDF ${R}_{X_1^n,X_2^n}(\Delta_1,\Delta_2)$.

\noi

%
%
%
%

%
%
%
%
\section{Joint Nonanticipative RDF with Individual Fidelity Criteria} \label{sec:ISofNRDF}

\subsection{Notation}\label{sec:preliminaries}
$\mathbb{R} \triangleq (-\infty,\infty)$, $\mathbb{Z} \tri \{\ldots,-1,0,1,\ldots\}$, {$\mathbb{Z}_0 \triangleq \{0,1,2, \ldots\}$,} $\mathbb{N}\triangleq\{1,2, \ldots\}$, $\mathbb{N}^n\triangleq\{1,\ldots,n\}$, $n \in {\mathbb N}$. For any  matrix $A\in \mathbb{R}^{p\times m}, (p,m)\in {\mathbb N}\times {\mathbb N}$, we denote its transpose by $A\T$,   its pseudoinverse by  $A^\dagger \in \mathbb{R}^{m\times p}$,  and for $m=p$,  we denote its trace  by  $\trace(A)$. The $n$ by $n$ identity (resp. zero) matrix is represented by $I_n$ (resp. $0_n$). 
 ${\cal S}_+^{p\times p}$ denotes the set of symmetric positive semidefinite  matrices  $A\in \mathbb{R}^{p\times p}$, and ${\cal S}_{++}^{p\times p}$ its subset of positive definite matrices. The statement $A\succeq B$ (resp. $A\succ B$) means that $A-B$ is symmetric positive semidefinite (resp. positive definite).   \\
Denote an arbitrary set or space by ${\mathbb U}$ and the product space formed by $n \in {\mathbb N}$ copies of it by  ${\mathbb U}^n \tri \times_{t=1}^n {\mathbb U}$. $u^n \in {\mathbb U}^n$ denotes the set of $n-$tuples $u^n \tri (u_1,u_2, \ldots, u_n)$, where $u_k \in {\mathbb  U}, k=1, \ldots, n$ are its coordinates.\\
Denote a probability space by $(\Omega, {\cal F}, {\mathbb P})$. For a sub-sigma-field  ${\cal  G} \subseteq {\cal  F}$,  $A \in {\cal  F}$, we denote by ${\mathbb P}(A|{\cal  G})={\mathbb P}(A|{\cal  G})(\omega), \omega \in \Omega$ the conditional probability of $A$ given ${\cal  G}$.
 For a tuple of real-valued RVs (RV) $X: \Omega \rightarrow {\mathbb X}, Y: \Omega \rightarrow {\mathbb Y}$, where $({\mathbb X}, {\cal  B}({\mathbb  X})),    ({\mathbb Y}, {\cal B}({\cal Y}))$ are  measurable spaces,  we denote the measure (resp.  joint distribution, if ${\mathbb X}, {\mathbb Y}$ are Euclidean spaces) induced by RVs $(X, Y)$ on ${\mathbb X} \times {\mathbb Y}$ by  ${\bf P}(dx,dy)$ (resp. ${\bf P}_{X,Y}$), and their marginals on ${\mathbb X}$ and ${\mathbb Y}$ by  ${\bf P}(dx)$ and ${\bf P}(dy)$  (resp.  ${\bf P}_{X}$ and ${\bf P}_{Y}$), respectively. We denote the conditional distribution of RV $X$ conditioned on $Y$ by ${\bf P}_{X|Y}$ or ${\bf P}(dx|y)$, if $Y=y$ is fixed.  \\
For  a triple of real-valued RVs $X: \Omega \rightarrow {\mathbb X}, Y: \Omega \rightarrow {\mathbb Y}$, $Z: \Omega \rightarrow {\mathbb Z}$,  we say that RVs $(Y, Z)$ are conditional independent given RV $X$ if ${\bf P}_{Y, Z|X}={\bf P}_{Y|X} {\bf P}_{Z|X}-$a.s (almost surely) or equivalently ${\bf P}_{Z|X,Y}={\bf P}_{Z|X}-$a.s; the specification a.s is often omitted.  We often  denote the above conditional independence by the Markov chain (MC) $Y \leftrightarrow X \leftrightarrow Z$.\\
The conditional covariance of the two-component vector RV $X = (X_1\T, X_2\T)\T$, $X_i: \Omega \rar {\mathbb R}^{p_i}, i=1,2$ conditioned on the two-component  vector $Y = (Y_1\T, Y_2\T )\T$, $Y_i: \Omega \rar {\mathbb R}^{p_i}, i=1,2$ is denoted by $Q_{(X_1,X_2)|Y}\tri \Cov\Big(X,X\Big|Y\Big) \succeq 0$, where
\begin{align}
Q_{(X_1,X_2)|Y} =& \begin{pmatrix}
Q_{X_1|Y} & Q_{X_1,X_2|Y}  \\
Q_{X_1,X_2|Y}\T & Q_{X_2|Y} \\
\end{pmatrix} \in {\mathbb R}^{(p_1+p_2)\times (p_1+p_2)}, \nonumber\\
Q_{X_1,X_2|Y}  \tri &\Cov\Big(X_1,X_2\Big| Y\Big). \nonumber \\
 \sr{(1)}{=} &  \mathbf{E} \Big\{ \Big( X_1 - {\mathbf E}\Big\{ X_1\Big|Y\Big\} \Big) \Big(X_2 - {\mathbf E}\Big \{ X_2\Big|Y\Big\} \Big)\T \Big\}  \nonumber \\
= &  \mathbf{E} \Big\{ E_1  E_2\T \Big\}, \hso E_i \tri X_i - {\mathbf E}\Big\{ X_i\Big|Y\Big\}, \hso i=1,2 \nonumber\\
\equiv & \Sigma_{E_1, E_2}
\end{align}
and where (1) holds if $(X_1, X_2, Y_1, Y_2)$ is jointly Gaussian. Similarly for $Q_{X_i|Y}, i=1,2$. Consequently, for jointly Gaussian RVs $(X_1, X_2, Y_1, Y_2)$, and the  two-component vector  RV $E  \tri (E_1\T,E_2\T)\T$,  we have $Q_{(X_1,X_2)|Y}=\Sigma_{(E_1,E_2)}$.

\ \

\subsection{Equivalent Sequential Formula of Joint NRDF}
First, we give the sequential equivalent of  the joint NRDF $R_{X_1^n,X_2^n}(\Delta_1, \Delta_2)$. We make use of the following lemma.\\
\begin{lemma} \cite{charalambous-stavrou-ahmed2014ieeetac,stavrou-charalambous-charalambous-loyka2018siam} Conditional independence conditions\\
\label{equivalent_statements}
 The  following statements are equivalent  $\forall n \in {\mathbb N}$.
\begin{list5}
\item[{ MC1}:] ${\bf P}_{Y_1^n,Y_2^n|X_1^n, X_2^n}=\otimes_{t=1}^n{\bf P}_{Y_{1,t}, Y_{2,t}|Y_1^{t-1},Y_2^{t-1},X_1^t, X_2^t}$.

\item[{ MC2}:]  $(Y_{1,t}, Y_{2,t}) \leftrightarrow (X_1^t,X_2^t, Y_1^{t-1}, Y_2^{t-1}) \leftrightarrow (X_{1,t+1}^n, X_{2,t+1}^n)$ forms a MC,~for each $t=1,\ldots, n-1$.

\item[{ MC3}:] $(Y_1^t, Y_2^t) \leftrightarrow (X_1^t, X_2^t) \leftrightarrow (X_{1,t+1}, X_{2,t+1})$ forms a MC,~for each  $t=1,\ldots, n-1$.

\item[{ MC4}:] $(X_{1,t+1}^n, X_{2, t+1}^n)\leftrightarrow (X_1^t,X_2^t)\leftrightarrow (Y_1^t, Y_2^t)$ forms a MC,~for each $t=1,\ldots, n-1$. 
\end{list5}
\end{lemma}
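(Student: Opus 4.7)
The plan is to establish the equivalences through the cyclic chain MC1 $\Leftrightarrow$ MC2 $\Rightarrow$ MC4 $\Rightarrow$ MC3 $\Rightarrow$ MC4 $\Rightarrow$ MC2, where the inductive step MC3 $\Rightarrow$ MC4 is the main obstacle and everything else is either a chain-rule manipulation or an application of the semi-graphoid axioms of conditional independence.

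First, MC1 $\Leftrightarrow$ MC2 is a direct chain-rule argument. The identity ${\bf P}_{Y_1^n,Y_2^n|X_1^n,X_2^n}=\otimes_{t=1}^n {\bf P}_{Y_{1,t},Y_{2,t}|Y_1^{t-1},Y_2^{t-1},X_1^n,X_2^n}$ always holds. The MC1 factorization is therefore equivalent to the step-by-step equality of each factor with ${\bf P}_{Y_{1,t},Y_{2,t}|Y_1^{t-1},Y_2^{t-1},X_1^t,X_2^t}$, which is exactly the conditional independence statement MC2. Next, MC2 $\Rightarrow$ MC4 is obtained by restricting the above chain-rule expansion to $Y_1^t,Y_2^t$ and applying MC2 inside each factor for $s\le t$; the resulting product is $\sigma(X_1^t,X_2^t)$-measurable, which is precisely MC4. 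The implication MC4 $\Rightarrow$ MC3 is trivial, since MC3 is obtained from MC4 by the decomposition axiom, marginalizing the future $(X_{1,t+1}^n,X_{2,t+1}^n)$ onto its first coordinate $(X_{1,t+1},X_{2,t+1})$.

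The core difficulty is MC3 $\Rightarrow$ MC4, which I plan to prove by induction on the length $k\in\{1,\ldots,n-t\}$ of the future block. The inductive claim is
\begin{align}
(Y_1^t,Y_2^t)\leftrightarrow (X_1^t,X_2^t)\leftrightarrow (X_{1,t+1}^{t+k},X_{2,t+1}^{t+k}), \nonumber
\end{align}
with base case $k=1$ being MC3 itself. For the inductive step, I first invoke MC3 at time $t+k$ to get
\begin{align}
(Y_1^{t+k},Y_2^{t+k})\leftrightarrow (X_1^{t+k},X_2^{t+k})\leftrightarrow (X_{1,t+k+1},X_{2,t+k+1}), \nonumber
\end{align}
and then apply decomposition to the sub-tuple $(Y_1^t,Y_2^t)\subseteq (Y_1^{t+k},Y_2^{t+k})$. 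Combining this with the inductive hypothesis via the contraction axiom, with conditioning set $(X_1^t,X_2^t)$, extending set $(X_{1,t+1}^{t+k},X_{2,t+1}^{t+k})$, and new coordinate $(X_{1,t+k+1},X_{2,t+k+1})$, yields the claim for $k+1$. Taking $k=n-t$ delivers MC4.

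Finally, MC4 $\Rightarrow$ MC2 closes the loop by a single application of the weak union axiom: from $(X_{1,t+1}^n,X_{2,t+1}^n)\perp (Y_1^{t-1},Y_2^{t-1},Y_{1,t},Y_{2,t})\mid (X_1^t,X_2^t)$ we obtain $(X_{1,t+1}^n,X_{2,t+1}^n)\perp (Y_{1,t},Y_{2,t})\mid (X_1^t,X_2^t,Y_1^{t-1},Y_2^{t-1})$, which is MC2. The only subtleties in the write-up are to specify the technical setting in which these semi-graphoid axioms hold (regular conditional probabilities on Polish reproduction spaces suffice, and this is already implicit from Section~\ref{sec:preliminaries}), and to be careful with the $t=1$ and $t=n-1$ endpoints of the induction where some of the conditioning tuples are empty.
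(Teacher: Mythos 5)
Your proof is correct. Note that the paper itself does not prove Lemma~\ref{equivalent_statements}; it imports the result from the cited references, so there is no in-paper argument to compare against. Your cycle MC1 $\Leftrightarrow$ MC2 $\Rightarrow$ MC4 $\Rightarrow$ MC3 $\Rightarrow$ MC4 $\Rightarrow$ MC2 is logically complete, and each link checks out: the chain-rule identification of the factors for MC1 $\Leftrightarrow$ MC2, the measurability-in-$(X_1^t,X_2^t)$ argument for MC2 $\Rightarrow$ MC4, decomposition for MC4 $\Rightarrow$ MC3, weak union for MC4 $\Rightarrow$ MC2, and the induction on the future block via contraction for MC3 $\Rightarrow$ MC4 (correctly using only semi-graphoid axioms, which need no positivity assumption, and correctly invoking MC3 at the shifted times $t,\ldots,n-1$ rather than only at $t$). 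This matches the standard route taken in the cited sources, and your closing caveats about regular conditional probabilities and the degenerate endpoints are the right ones.
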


\ \

By Lemma~\ref{equivalent_statements}, Condition (C)  is equivalent to MC1, and the joint distribution of $(X_1^n, X_2^n, Y_1^n, Y_2^n)$, is expressed as 
\begin{align}
&{\bf P}_{X_1^n,X_2^n,Y_1^n,Y_2^n}={\bf P}_{Y_{1,n}, Y_{2,n}|Y_1^{n-1}, Y_2^{n-1},X_1^n,X_2^n}\nonumber \\
&\hst  \otimes {\bf P}_{X_{1,n}, X_{2,n}|Y_1^{n-1}, Y_2^{n-1},X_1^{n-1},X_2^{n-1}} \ldots \nonumber \\
&\hst \otimes  {\bf P}_{Y_{1,2}, Y_{2,2}|Y_{1,1}, Y_{2,1},X_1^2,X_2^2} \otimes    {\bf P}_{X_{1,2}, X_{2,2}|X_{1,1}, X_{2,1}}\nonumber \\
&\hst \otimes  {\bf P}_{Y_{1,1}, Y_{2,1}|X_{1,1},X_{2,1}} \otimes    {\bf P}_{X_{1,1}, X_{2,1}}\ \label{sjdf}
\end{align}
The information measure $I(X_1^n, X_2^n;Y_1^n,Y_2^n)$ in (\ref{eq.1_pw})  is expressed sequentially as, 
\begin{align}
&I(X_1^n, X_2^n;Y_1^n,Y_2^n)\nonumber \\
&= {\bf E}\Big\{ \sum_{t=1}^n   \log \Big( \frac{{\bf P}_{Y_{1,t}, Y_{2,t}|Y_1^{t-1},Y_2^{t-1},X_1^t,X_2^t}}{{\bf P }_{Y_{1,t},Y_{2,t} |Y_1^{t-1}, Y_2^{t-1}} }\Big)  \Big\} \nonumber \\
&= \sum_{t=1}^n I(X_1^t, X_2^t; Y_{1,t}, Y_{2,t}|Y_1^{t-1}, Y_2^{t-1}).\nonumber 
\end{align}
 The  joint NRDF  
$R_{X_1^n,X_2^n}(\Delta_1, \Delta_2)$ of (\ref{eq.1_pw}) subject to condition {(C)} is expressed sequentially as follows. 
\begin{align}
&R_{X_1^n, X_2^n}(\Delta_1, \Delta_2) \label{NRDF_A}  \\
&= \inf_{{\cal Q}_{X_1^n,X_2^n}(\Delta_1, \Delta_2)}\Big\{\sum_{t=1}^n I(X_1^t, X_2^t; Y_{1,t}, Y_{2,t}|Y_1^{t-1}, Y_2^{t-1}) \Big\} \nonumber
\end{align}
where 
\begin{align}
&{\cal Q}_{X_1^n, X_2^n}(\Delta_1, \Delta_2)\triangleq  \Big\{{\bf P}_{ X_1^t,X_2^t, Y_1^t, Y_2^t}, t=1,\ldots, n \Big|\nonumber \\
&\hst  \mbox{ (\ref{sjdf}) holds,   the ${\mathbb X}_1^n \times {\mathbb X}_2^n-$ marginal  is  ${\bf P}_{X_1^n, X_2^n}$, }    \nonumber \\
&\hst \mbox{ }\: \frac{1}{n}{\bf E}\Big\{d_{i,n}(X_i^n,Y_i^n)\Big\}\leq{\Delta_i}, i=1,2\Big\}.
\end{align}
It can be shown that 
\begin{align}
&{\cal Q}_{X_1^n, X_2^n}(\Delta_1, \Delta_2)= \Big\{{\bf P}_{Y_{1,t} Y_{2,t}|Y_1^{t-1},Y_2^{t-1}, X_1^t,X_2^t}, t=1,\ldots, n \Big|\nonumber \\
&\hst  \mbox{ the ${\mathbb X}_1^n \times {\mathbb X}_2^n-$ marginal of   (\ref{sjdf}) is  ${\bf P}_{X_1^n, X_2^n}$, }    \nonumber \\
&\hst \frac{1}{n}{\bf E}\Big\{d_{i,n}(X_i^n,Y_i^n)\Big\}\leq{\Delta_i}, i=1,2\Big\}.  
  \label{NRDF_D_A}
\end{align}

\subsection{Information Structures of  Sequential  Joint RDF for a Tuple of Markov Processes}
The main result of this section is Theorem~\ref{thm:jointlower}, which identifies structural properties of  the realizations $(Y_1^n, Y_2^n)$, of the test channels that minimize $\sum_{t=1}^nI(X_1^t,X_2^t;Y_{1,t},Y_{2,t}|Y_1^{t-1}, Y_2^{t-1})$, when the joint process $(X_1^n, X_2^n)$ is Markov and the fidelity is defined with respect to the square-error.

First,  we recall a preliminary result, of  a structural property of    test channel distributions from the set ${\cal Q}_{X_1^n, X_2^n}(\Delta_1, \Delta_2)$. \\

\begin{theorem} \cite{charalambous-stavrou2014ecc,stavrou-charalambous-charalambous-loyka2018siam}\\
\label{IS_general}
Consider the joint NRDF of (\ref{NRDF_A}), and assume 
the joint process $(X_1^n, X_2^n)$ is  Markov,  that is, ${\forall t \in {\mathbb N}^n}$
\begin{align}
{\bf P}_{X_{1,t}, X_{2,t}|X_1^{t-1}, X_2^{t-1}}= {\bf P}_{X_{1,t}, X_{2,t}|X_{1,t-1}, X_{2,t-1}}. \label{ms_1}
\end{align}
Then the joint NRDF is given by 
\begin{align}
&R_{X_1^n,X_2^n}(\Delta_1, \Delta_2)=\inf_{ {\cal M}_{X_1^n,X_2^n}(\Delta_1,\Delta_2)} \Big\{   \label{NRDF_Markov_1_a} \\
& {\bf E}\Big\{ \sum_{t=1}^n   \log \Big( \frac{{\bf P}_{Y_{1,t}, Y_{2,t}|Y_1^{t-1},Y_2^{t-1},X_{1,t},X_{2,t}}}{{\bf P }_{Y_{1,t},Y_{2,t} |Y_1^{t-1}, Y_2^{t-1}} }\Big)  \Big\} \Big\} \label{NRDF_Markov_1_b}\\
=&\inf_{{\cal M}_{X_1^n,X_2^n}(\Delta_1, \Delta_2)}\Big\{ \sum_{t=1}^n I(X_{1,t}, X_{2,t}; Y_{1,t}, Y_{2,t}|Y_1^{t-1}, Y_2^{t-1})\Big\}  \nonumber
\end{align}
where 
\begin{align}
&{\cal M}_{X_1^n,X_2^n}(\Delta_1, \Delta_2)\triangleq  \Big\{{\bf P}_{Y_{1,t} Y_{2,t}|Y_1^{t-1},Y_2^{t-1}, X_{1,t},X_{2,t}}, \nonumber \\
&\hso t=1,\ldots,n  \Big| \mbox{ the ${\mathbb X}_1^n \times {\mathbb X}_2^n-$ marginal corresp. to  (\ref{ms_1})}    \nonumber \\
&\hso  \frac{1}{n}{\bf E}\Big\{d_{i,n}(X_i^n,Y_i^n)\Big\}\leq{\Delta_i}, i=1,2\Big\},  \label{con}\\
& {\bf P}_{X_1^t,X_2^t, Y_1^t, Y_2^t}={\bf P}_{X_{1,1},X_{2,1}}  \otimes {\bf P}_{Y_{1,1}, Y_{2,1}|X_{1,1}, X_{2,1}}\nonumber  \\ 
& \otimes_{i=1}^t \Big( {\bf P}_{X_{1,t}, X_{2,t}|X_1^{i-1}, X_2^{i-1}}   \otimes   {\bf P}_{Y_{1,i}, Y_{2,i}|Y_1^{i-1},Y_2^{i-1}, X_1^t, X_2^t}  \Big), \nonumber \\
&{\bf P}_{Y_{1,t}, Y_{2,t}|Y_1^{t-1}, Y_2^{t-1}} = \int_{ {\mathbb X}_1 \times {\mathbb X}_2}  {\bf P}_{Y_{1,t}, Y_{2,t}|Y_1^{t-1},Y_2^{t-1}, X_{1,t} X_{2,t}} \nonumber \\
& \hspace*{4.0cm} \otimes {\bf P}_{X_{1,t}, X_{2,t}|Y_1^{t-1}, Y_2^{t-1}} 
\end{align}
\end{theorem}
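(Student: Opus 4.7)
The plan is to show that under the Markov source hypothesis (\ref{ms_1}), the infimum in the sequential expression (\ref{NRDF_A}) is attained by test channels of the reduced form ${\bf P}_{Y_{1,t}, Y_{2,t}|Y_1^{t-1}, Y_2^{t-1}, X_{1,t}, X_{2,t}}$, so that the per-stage mutual information collapses from $I(X_1^t, X_2^t; Y_{1,t}, Y_{2,t}|Y_1^{t-1}, Y_2^{t-1})$ to $I(X_{1,t}, X_{2,t}; Y_{1,t}, Y_{2,t}|Y_1^{t-1}, Y_2^{t-1})$. The proof naturally splits into a lower bound (via the chain rule of conditional mutual information) and a matching achievability bound (via a test-channel marginalization argument).

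For the lower bound, the chain rule applied to each summand yields
\begin{align}
&I(X_1^t, X_2^t; Y_{1,t}, Y_{2,t}|Y_1^{t-1}, Y_2^{t-1}) \nonumber \\
&= I(X_{1,t}, X_{2,t}; Y_{1,t}, Y_{2,t}|Y_1^{t-1}, Y_2^{t-1}) \nonumber \\
&\quad + I(X_1^{t-1}, X_2^{t-1}; Y_{1,t}, Y_{2,t}|Y_1^{t-1}, Y_2^{t-1}, X_{1,t}, X_{2,t}). \nonumber
\end{align}
Since the second term is non-negative, dropping it yields a lower bound, and it vanishes exactly under the Markov chain $(X_1^{t-1}, X_2^{t-1}) \leftrightarrow (X_{1,t}, X_{2,t}, Y_1^{t-1}, Y_2^{t-1}) \leftrightarrow (Y_{1,t}, Y_{2,t})$, that is, under the reduced test-channel form appearing in ${\cal M}_{X_1^n, X_2^n}(\Delta_1, \Delta_2)$.

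For achievability, given any admissible ${\bf P}_{Y_{1,t}, Y_{2,t}|Y_1^{t-1}, Y_2^{t-1}, X_1^t, X_2^t} \in {\cal Q}_{X_1^n, X_2^n}(\Delta_1, \Delta_2)$, I would construct the marginalized family
\begin{align}
&\widetilde{\bf P}_{Y_{1,t}, Y_{2,t}|Y_1^{t-1}, Y_2^{t-1}, X_{1,t}, X_{2,t}} \nonumber \\
&\tri \int {\bf P}_{Y_{1,t}, Y_{2,t}|Y_1^{t-1}, Y_2^{t-1}, X_1^t, X_2^t} \otimes {\bf P}_{X_1^{t-1}, X_2^{t-1}|Y_1^{t-1}, Y_2^{t-1}, X_{1,t}, X_{2,t}} \nonumber
\end{align}
and induct on $t$, replacing the original test channels one stage at a time. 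Because $d_{i,n}$ is additive per letter, the single-time marginals ${\bf P}_{X_{i,t}, Y_{i,t}}$ are preserved and the distortion constraints carry over. The Markov source property (\ref{ms_1}) together with MC3/MC4 of Lemma~\ref{equivalent_statements} ensures that the substitution is consistent with the fixed ${\bf P}_{X_1^n, X_2^n}$ and with the causality condition (C); the resulting joint law is the one displayed at the end of the theorem. Under $\widetilde{\bf P}$, the chain-rule identity has a vanishing second term by construction, and the sum is no greater than under the original ${\bf P}$, matching the lower bound.

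The main technical obstacle is the inductive verification that the marginalized test channel at stage $t$ still produces the correct source marginal on $(X_1^n, X_2^n)$ and preserves the MC2/MC3 structure needed for subsequent stages, while leaving the distortion untouched. Specifically, one must invoke (\ref{ms_1}) to establish that ${\bf P}_{X_{1,t+1}, X_{2,t+1}|Y_1^t, Y_2^t, X_1^t, X_2^t} = {\bf P}_{X_{1,t+1}, X_{2,t+1}|X_{1,t}, X_{2,t}}$ under the new joint law, which keeps the induction running and justifies the factorization of ${\bf P}_{X_1^t, X_2^t, Y_1^t, Y_2^t}$ stated in the theorem.
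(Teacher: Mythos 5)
The paper does not actually prove Theorem~\ref{IS_general}: it is recalled from the cited references, so there is no in-paper argument to compare against. Your two-step scheme --- a chain-rule lower bound $I(X_1^t,X_2^t;Y_{1,t},Y_{2,t}|Y_1^{t-1},Y_2^{t-1})=I(X_{1,t},X_{2,t};Y_{1,t},Y_{2,t}|Y_1^{t-1},Y_2^{t-1})+I(X_1^{t-1},X_2^{t-1};Y_{1,t},Y_{2,t}|Y_1^{t-1},Y_2^{t-1},X_{1,t},X_{2,t})$ followed by achievability via marginalizing the test channel over $(X_1^{t-1},X_2^{t-1})$ --- is precisely the route taken in those references, and it is sound. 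The one place your sketch is thinner than it should be is the sentence claiming that ``the single-time marginals ${\bf P}_{X_{i,t},Y_{i,t}}$ are preserved'' because the distortion is additive: preservation of these marginals is the substance of the achievability step, not a consequence of additivity (additivity only tells you that preserving them suffices). The clean way to close this is to replace \emph{all} stages simultaneously by $\widetilde{\bf P}_{Y_{1,t},Y_{2,t}|Y_1^{t-1},Y_2^{t-1},X_{1,t},X_{2,t}}$ computed under the original joint law, and prove by induction on $t$ that the new law satisfies $\widetilde{\bf P}_{X_{1,t},X_{2,t},Y_1^t,Y_2^t}={\bf P}_{X_{1,t},X_{2,t},Y_1^t,Y_2^t}$; the inductive step is exactly the identity ${\bf P}_{X_{1,t+1},X_{2,t+1}|X_1^t,X_2^t,Y_1^t,Y_2^t}={\bf P}_{X_{1,t+1},X_{2,t+1}|X_{1,t},X_{2,t}}$ that you correctly isolate, which follows from (\ref{ms_1}) together with condition (C)/MC4 of Lemma~\ref{equivalent_statements}. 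With that induction in place the source marginal, the per-letter distortions, and the reduced per-stage informations are all unchanged, the residual chain-rule term vanishes under $\widetilde{\bf P}$, and the lower bound is attained, which is what the theorem asserts. Your one-stage-at-a-time replacement can be made to work but is more delicate, since each substitution alters the downstream joint law against which the next marginalization is computed.
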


\ \

Next,  we identify an important structural property of the optimal reproduction process $(Y_1^n, Y_2^n)$. \\

\begin{theorem} Structural property of  reproduction process\\
 \label{thm:jointlower}
Consider the statement of Theorem~\ref{IS_general} and the joint NRDF $R_{X_1^n,X_2^n}(\Delta_1, \Delta_2)$ of  (\ref{NRDF_Markov_1_a}) for  the Markov $(X_1^n, X_2^n)$. 

\noi (a) Define 
\begin{align}
&\widehat{X}_{i,t}= g_{i,t}(Y_1^t, Y_2^t), \; \forall t \in {\mathbb N}^n, \:   i=1,2, \\
&g_{i,t}: {\mathbb Y}_1^t\times{\mathbb Y}_2^t \rightarrow {\mathbb Y}_i,\hso \mbox{$g_{i,t}(\cdot)$ are meas. functions,} \; i=1,2 \nonumber.
\end{align}
Then, the following inequality holds for $t=1, \ldots, n$: 
\begin{align}
&I(X_{1,t},X_{2,t};Y_{1,t},Y_{2,t}|Y_1^{t-1}, Y_2^{t-1}) \nonumber \\
 &\quad \geq I(X_{1,t},X_{2,t}; {\widehat{X}_{1,t}},{\widehat{X}_{2,t}}|Y_1^{t-1}, Y_2^{t-1}).\label{eq:MutualInfoIneq}
\end{align}
Moreover,  if there exist  $(\widehat{X}_{1,t}, \widehat{X}_{2,t})$ such that  the functions $g_{i,t}(\cdot,\cdot)$ satisfy for $ i=1,2  $
\begin{align}
g_{i,t} (Y_1^t,Y_2^t)  =  \mathbf{E} \Big\{ X_{i,t}\Big|Y_1^t, Y_2^t \Big\}=Y_{i,t}, \; \forall t \in {\mathbb N}^n,  \label{eq_cm}
\end{align}
then the inequality in \eqref{eq:MutualInfoIneq} holds with equality.\\
(b) Let  ${\mathbb X}_1 \times {\mathbb X}_2\times {{\mathbb Y}}_1\times {{\mathbb Y}}_2={\mathbb R}^{p_1} \times {\mathbb R}^{p_2}\times {\mathbb R}^{p_1}\times {\mathbb R}^{p_2}$, $(p_1,p_2) \in {\mathbb Z}_+$. 
 For all measurable functions $h_{i,t}(Y_1^t,Y_2^t)$, $i=1,2$ then 
\begin{align}
&{\bf E}\Big\{\big|\big|X_{i,t}-h_{i,t}(Y_1^t,Y_2^t)    \big|\big|_{{\mathbb R}^{p_i}}^2\Big\}  \nonumber \\
&   \geq {\bf E}\Big\{\big|\big|X_{i,t}-\mathbf{E} \Big\{ X_{i,t}\Big| Y_1^t,Y_2^t   \Big\}\big|\big|_{{\mathbb R}^{p_i}}^2\Big\},\hso \forall t \in {\mathbb N}^n, \: i=1,2.\nonumber
\end{align}
(c) Suppose ${\mathbb X}_1 \times {\mathbb X}_2\times {{\mathbb Y}}_1\times {{\mathbb Y}}_2={\mathbb R}^{p_1} \times {\mathbb R}^{p_2}\times {\mathbb R}^{p_1}\times {\mathbb R}^{p_2}$, $(p_1,p_2) \in {\mathbb Z}_+$, and (\ref{eq_cm})    holds. Then the joint RDF  given by  (\ref{NRDF_Markov_1_b}) is characterized by 
\begin{align}
R_{X_1^n,X_2^n}&(\Delta_1, \Delta_2) =\inf_{{\cal M}_{X_1^n, X_2^n}^{cm}(\Delta_1, \Delta_2)}\Big\{ \nonumber \\
 &\sum_{t=1}^n I(X_{1,t}, X_{2,t}; Y_{1,t}, Y_{2,t}|Y_1^{t-1}, Y_2^{t-1})\Big\} \label{jRDF_g_cm}
\end{align}
where  $ {\cal M}_{X_1^n, X_2^n}^{cm}(\Delta_1, \Delta_2) \subseteq {\cal M}_{X_1^n, X_2^n}(\Delta_1, \Delta_2)$, with the additional restriction   ${g}_{i,t} (Y_1^t,Y_2^t)=\mathbf{E} \Big\{ X_{i,t}\Big|Y_1^t, Y_2^t\Big\}=Y_{i,t},-a.s. , \forall t\in {\mathbb N}^n$ for $i=1,2$.
\end{theorem}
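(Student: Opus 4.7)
The plan is to treat the three parts sequentially: parts (a) and (b) are short applications of standard inequalities, and part (c) is a reduction argument that combines them.

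\textbf{Parts (a) and (b).} For (a), since $\widehat{X}_{i,t}=g_{i,t}(Y_1^{t-1},Y_{1,t},Y_2^{t-1},Y_{2,t})$, conditioning on $(Y_1^{t-1},Y_2^{t-1})$ makes $(\widehat{X}_{1,t},\widehat{X}_{2,t})$ a deterministic function of $(Y_{1,t},Y_{2,t})$. Hence the Markov chain $(X_{1,t},X_{2,t})\leftrightarrow(Y_{1,t},Y_{2,t})\leftrightarrow(\widehat{X}_{1,t},\widehat{X}_{2,t})$ holds conditionally on $(Y_1^{t-1},Y_2^{t-1})$, and the conditional data processing inequality yields \eqref{eq:MutualInfoIneq}. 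Under \eqref{eq_cm} one has $\widehat{X}_{i,t}=Y_{i,t}$ identically, so the two conditional mutual informations coincide term-by-term. Part (b) is the $L^2$ orthogonality principle: $\mathbf{E}[X_{i,t}|Y_1^t,Y_2^t]$ is the orthogonal projection of $X_{i,t}$ onto the subspace of $\sigma(Y_1^t,Y_2^t)$-measurable square-integrable vectors, hence minimizes the $L^2$-error against any $h_{i,t}(Y_1^t,Y_2^t)$.

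\textbf{Part (c).} The inclusion $\mathcal{M}^{cm}\subseteq\mathcal{M}$ already yields $\inf_{\mathcal{M}^{cm}}\ge R_{X_1^n,X_2^n}(\Delta_1,\Delta_2)$. For the reverse direction I would take any feasible ${\bf P}\in\mathcal{M}_{X_1^n,X_2^n}(\Delta_1,\Delta_2)$ with reproduction $(Y_1^n,Y_2^n)$, set $\widehat{X}_{i,t}\triangleq\mathbf{E}[X_{i,t}|Y_1^t,Y_2^t]$, and show that $(\widehat{X}_1^n,\widehat{X}_2^n)$ induces an element of $\mathcal{M}^{cm}$ of no larger objective. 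The verification has three ingredients: (i) the tower property gives $\mathbf{E}[X_{i,t}|\widehat{X}_1^t,\widehat{X}_2^t]=\mathbf{E}[\widehat{X}_{i,t}|\widehat{X}_1^t,\widehat{X}_2^t]=\widehat{X}_{i,t}$, which is the CM condition \eqref{eq_cm} for the new reproduction; (ii) since $(\widehat{X}_1^t,\widehat{X}_2^t)$ is a measurable function of $(Y_1^t,Y_2^t)$, the causality condition MC4 of Lemma~\ref{equivalent_statements} passes directly from $Y$ to $\widehat{X}$; (iii) part (b) gives $\mathbf{E}\|X_{i,t}-\widehat{X}_{i,t}\|^2\le \mathbf{E}\|X_{i,t}-Y_{i,t}\|^2$, preserving both fidelity constraints. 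For the objective, causality together with the Markov structure of $(X_1^n,X_2^n)$ lets me invoke Theorem~\ref{IS_general} to collapse each conditional sum to its joint counterpart, yielding
\begin{align*}
&\sum_{t=1}^n I(X_{1,t},X_{2,t};\widehat{X}_{1,t},\widehat{X}_{2,t}|\widehat{X}_1^{t-1},\widehat{X}_2^{t-1}) \\
&= I(X_1^n,X_2^n;\widehat{X}_1^n,\widehat{X}_2^n) \le I(X_1^n,X_2^n;Y_1^n,Y_2^n) \\
&= \sum_{t=1}^n I(X_{1,t},X_{2,t};Y_{1,t},Y_{2,t}|Y_1^{t-1},Y_2^{t-1}),
\end{align*}
where the middle inequality is the unconditional DPI applied to the deterministic map $\widehat{X}^n=(g_{1,\cdot},g_{2,\cdot})(Y_1^n,Y_2^n)$.

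\textbf{Main obstacle.} The delicate point is the objective inequality in (c): when we swap reproductions the conditioning in the sum changes from $Y^{t-1}$ to $\widehat{X}^{t-1}$, and part (a) only compares the two informations term-by-term under a \emph{common} conditioning $(Y_1^{t-1},Y_2^{t-1})$, so it does not suffice on its own. The cleanest route is therefore to bypass term-by-term comparison and collapse both sides to $I(X_1^n,X_2^n;\cdot)$ via Theorem~\ref{IS_general}, where the unconditional DPI applies unambiguously. A secondary, measure-theoretic subtlety — that $\mathbf{E}[X_{i,t}|Y_1^t,Y_2^t]$ is defined only a.s. — is handled by fixing a regular measurable version $g_{i,t}$ so that $\widehat{X}_{i,t}=g_{i,t}(Y_1^t,Y_2^t)$ is a bona fide deterministic reproduction kernel to which MC4 and the DPI can be applied.
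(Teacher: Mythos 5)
Your parts (a) and (b) coincide with the paper's own argument: the paper proves \eqref{eq:MutualInfoIneq} by writing $I(X_{1,t},X_{2,t};Y_{1,t},Y_{2,t}|Y_1^{t-1},Y_2^{t-1})=I(X_{1,t},X_{2,t};Y_{1,t},Y_{2,t},\widehat{X}_{1,t},\widehat{X}_{2,t}|Y_1^{t-1},Y_2^{t-1})$ and dropping the nonnegative term $I(X_{1,t},X_{2,t};Y_{1,t},Y_{2,t}|\widehat{X}_{1,t},\widehat{X}_{2,t},Y_1^{t-1},Y_2^{t-1})$ via the chain rule, which is exactly the conditional data-processing inequality you invoke; part (b) is the orthogonal projection theorem in both cases. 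Where you genuinely diverge is part (c): the paper disposes of it in one sentence (``due to (a), (b), and the fact that the fidelity constraints hold with equality''), whereas you spell out the reduction --- replace $Y$ by $\widehat{X}_{i,t}=\mathbf{E}\{X_{i,t}|Y_1^t,Y_2^t\}$, verify the conditional-mean property by the tower rule, verify causality and fidelity, and compare objectives. You also correctly identify the point the paper glosses over: the per-letter bound of part (a) keeps the conditioning on $(Y_1^{t-1},Y_2^{t-1})$, while the candidate element of ${\cal M}_{X_1^n,X_2^n}^{cm}$ is scored with conditioning on $(\widehat{X}_1^{t-1},\widehat{X}_2^{t-1})$, so (a) alone does not close the argument; collapsing both sums to $I(X_1^n,X_2^n;\cdot)$ and using the unconditional DPI is a legitimate fix. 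Two small wrinkles in your write-up: the first display of your chain should be an inequality $\sum_{t}I(X_{1,t},X_{2,t};\widehat{X}_{1,t},\widehat{X}_{2,t}|\widehat{X}_1^{t-1},\widehat{X}_2^{t-1})\leq I(X_1^n,X_2^n;\widehat{X}_1^n,\widehat{X}_2^n)$ rather than an equality, since the induced kernel of $\widehat{X}_t$ given $(\widehat{X}^{t-1},X^t)$ need not depend on $X^t$ only through $(X_{1,t},X_{2,t})$; fortunately the inequality points the right way. For the same reason, the induced reproduction is a priori only in the larger class of Theorem~\ref{IS_general}, not in ${\cal M}_{X_1^n,X_2^n}^{cm}$ as literally defined, so you should add the marginalization step of Theorem~\ref{IS_general} (replacing the kernel by its version conditioned on $(X_{1,t},X_{2,t},\widehat{X}^{t-1})$ only, which preserves the per-letter joint laws, hence the distortions and the conditional-mean property) before concluding. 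With those two touch-ups your argument is complete and is in fact more rigorous than the proof printed in the paper.
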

\begin{proof} (a) By properties of mutual information, we have 
\begin{align}
 & I(X_{1,t},X_{2,t};Y_{1,t},Y_{2,t}|Y_1^{t-1}, Y_2^{t-1})\\
&\sr{(1)}{=}  I(X_{1,t},X_{2,t};Y_{1,t},Y_{2,t},\widehat{X}_{1,t},\widehat{X}_{2,t} |Y_1^{t-1}, Y_2^{t-1})\\ 
&\sr{(2)}{=} I(X_{1,t},X_{2,t}; Y_{1,t},Y_{2,t}| \widehat{X}_{1,t},\widehat{X}_{2,t}, Y_1^{t-1}, Y_2^{t-1}) \nonumber \\&\quad + I(X_{1,t},X_{2,t}; \widehat{X}_{1,t},\widehat{X}_{2,t}| Y_1^{t-1}, Y_2^{t-1}  )\nonumber \\
& \sr{(3)}{\geq}I(X_{1,t},X_{2,t}; \widehat{X}_{1,t},\widehat{X}_{2,t}| Y_1^{t-1}, Y_2^{t-1}  ), \label{eq:LB}
\end{align} 
where \((1)\) is due to $\widehat{X}_i,i=1,2$,  are  functions of $(Y_1^t,Y_2^t)$,  \((2)\) is due to the chain rule of mutual information, and \((3)\)   is due to $ I(X_{1,t},X_{2,t}; Y_{1,t},Y_{2,t}| \widehat{X}_{1,t},\widehat{X}_{2,t}, Y_1^{t-1}, Y_2^{t-1})\geq 0$.  Thus,  (\ref{eq:MutualInfoIneq}) is obtained. Furthermore, if  $\widehat{X}_{i,t}=g_{i,t}(Y_1^t,Y_2^t)  = Y_{i,t}-a.s, i=1,2$  hold,   then $I(X_{1,t},X_{2,t}; Y_{1,t},Y_{2,t}| \widehat{X}_{1,t},\widehat{X}_{2,t}, Y_1^{t-1}, Y_2^{t-1})=0$, and hence the inequality (\ref{eq:LB}) become equality.  
(\textit{b}) The inequality is well-known, due to the orthogonal projection theorem. 
(c) This is due to (a), (b), and the fact that the fidelity  constraints hold with equality.
\end{proof}

\ \

\begin{remark} For a tuple of  Gaussian Markov processes, $(X_1^n, X_2^n)$,  Theorem~\ref{IS_general} and      Theorem~\ref{thm:jointlower}, are used in the remaining paper to characterize  joint NRDF.
\end{remark}

\section{Joint NRDF of Multivariate Gaussian Markov Processes with  Individual MSE Distortion Criteria} 
\label{ar_1}
For the rest of the paper we consider the tuple of multivariate Gaussian Markov process of Definition~\ref{def:gmp}.\\

\begin{definition}
\label{def:gmp}
A  tuple of  multivariate  Gaussian Markov process, $X_t = (X_{1,t}\T, X_{2,t}\T)\T$,  $X_{i,t}: \Omega \rar {\mathbb R}^{p_i}, t=0, \ldots, n$, $i=1,2$,  is defined for ${t=1, \ldots, n-1}$  by the recursion
\begin{align}
\begin{pmatrix}
X_{1,t+1} \\X_{2,t+1}
\end{pmatrix}=A_{t} \begin{pmatrix}
X_{1,t} \\X_{2,t}
\end{pmatrix}+B_{t} \begin{pmatrix}
W_{1,t+1} \\W_{2,t+1}
\end{pmatrix},  \:  \label{SSM_1}
\end{align}
where \emph{(i)}  $A_{t}\in\mathbb{R}^{(p_1+p_2)\times (p_1+p_2)}, B_{t}\in\mathbb{R}^{(p_1+p_2)\times{(q_1 +q_2)}}$ are non-random matrices; \emph{(ii)} $\{W_{i,t}: t=2, \ldots, n-1\}$ is an $\mathbb{R}^{{q_i}}$-valued independent  Gaussian process,  for $i=1,2$, $W_t = (W_{1,t}\T, W_{2,t}\T)\T \in G(0, Q_{(W_{1,t}, W_{2,t})})$, $Q_{(W_{1,t}, W_{2,t})}\succeq 0$, independent of $X_1$; \emph{(iii)} $X_1\in\mathbb{R}^{p_1+p_2}$ is Gaussian $X_1 \in G(0, Q_{(X_{1,1}, X_{2,1})}), Q_{(X_{1,1}, X_{2,1})}\succeq 0$.
\label{def_GAR} 
\end{definition}

\ \

\begin{definition}
Define
\begin{align*}
&X_t \tri \begin{pmatrix}
X_{1,t} \\X_{2,t}
\end{pmatrix},\; Y_t \tri  \begin{pmatrix}
Y_{1,t} \\ Y_{2,t}
\end{pmatrix}, \:  \hso \forall t \in {\mathbb N}^n, \\   
&E_t \tri  \begin{pmatrix}
E_{1,t}\\ E_{2,t}
\end{pmatrix}, \: E_t^- \tri  \begin{pmatrix}
E_{1,t}^-\\ E_{2,t}^-
\end{pmatrix}, \nonumber \\ 
& E_{i,t}\tri X_{i,t} - \widehat{X}_{i,t|t}, \; E_{i,t}^-\tri X_{i,t} - \widehat{X}_{i,t|t-1}, i=1,2, \\
&\widehat{X}_{t|s} \tri   \mathbf{E}\Big\{ \begin{pmatrix}
X_{1,t} \\ X_{2,t} 
\end{pmatrix} \Big| Y^s  \Big\}= \begin{pmatrix} \widehat{X}_{1,t|s} \\ \widehat{X}_{2,t|s}\end{pmatrix}, \; \forall (t,s)\in {\mathbb N}^n \times {\mathbb N}^n \nonumber
\end{align*}
and  the mean-square errors  
\begin{align}
&\Sigma_{(E_{1,t}, E_{2,t})} \tri  {\bf E}\Big\{E_t E_t\T\Big\}, \ \ \forall{t} \in {\mathbb N}^n, \\
&\Sigma_{(E_{1,t}^-, E_{2,t}^-)} \tri  {\bf E}\left\{E_t^- \big(E_t^-\big)\T\right\}\ \ \forall {t} \in {\mathbb N}^n
\end{align}
where for $t=1$,   $\Sigma_{(E_{1,1}^-, E_{2,1}^-)}  \tri 
Q_{(X_{1,1}, X_{2,1})}$.
\end{definition}

Next, we present another structural property.
the tuple of Gaussian Markov process subject to two square-error distortion criteria. \\

\begin{theorem} \label{thm:suffjoint} 
Consider the joint NRDF $R_{X_1^n,X_2^n}(\Delta_1, \Delta_2)$ of (\ref{NRDF_A})   for the  tuple of  multivariate  Gaussian Markov process
of Definition~\ref{def:gmp}, with individual distortion criteria,  $d_{i,n}(x_1^n,y_1^n) \tri \frac{1}{n} \sum_{t=1}^n ||x_{1,t}-y_{{1},t}||_{{\mathbb R}^{p_i}}^2, i=1,2$. \\
The following hold.\\
(a) The  minimizing element of the  set ${\cal Q}_{X_1^n,X_2^n}(\Delta_1, \Delta_2)$ is  jointly Gaussian   ${\bf P}_{X_1^t,X_2^t, Y_1^t, Y_2^t}={\bf P}_{X_1^t,X_2^t, Y_1^t, Y_2^t}^G,  t=1, \ldots, n$,  and it is induced by the parametric realization 
 \begin{align}
\begin{pmatrix}
Y_{1,t} \\Y_{2,t}
\end{pmatrix}=&H_t \begin{pmatrix}
X_{1,t} \\X_{2,t}
\end{pmatrix}+\begin{pmatrix}  g_{1,t}( Y_1^{t-1}, Y_2^{t-1})  \\ g_{2,t}( Y_1^{t-1}, Y_2^{t-1}) \end{pmatrix} +\begin{pmatrix}
V_{1,t} \\V_{2,t}
\end{pmatrix}, \label{real_1_an} \\
 =&  H_tX_t+ \Big(I_{p_1+p_2}-H_t\Big)\widehat{X}_{t|t-1} + V_t
\end{align}
where 
\begin{align}
 &\text{$H_t =\begin{pmatrix}
H_{11,t} & H_{12,t}  \\H_{21,t} & H_{22,t}
\end{pmatrix}  \in {\mathbb R}^{(p_1+p_2)\times (p_1+p_2)}$ are nonrandom,} \label{real_1_b}  \\
 &g_{i,t}(Y_1^{t-1}, Y_2^{t-1})= \widehat{X}_{i,t|t-1} -  \begin{pmatrix}
H_{i1,t} & H_{i2,t} 
\end{pmatrix} \widehat{X}_{t|t-1}, \label{real_1_c} \\
&\widehat{X}_{t|t-1}= {\bf E}\Big\{X_t \Big|Y_1^{t-1}, Y_2^{t-1}\Big\} =A_{t-1} \begin{pmatrix} \widehat{X}_{1,t-1|t-1}  \\\widehat{X}_{2,t-1|t-1} \end{pmatrix}   \nonumber\\
 &\text{$V_t =(V_{1,t}\T, V_{2,t}\T)\T \in G(0, Q_{(V_{1,t}, V_{2,t})}), Q_{(V_{1,t}, V_{2,t})} \succeq 0$,} \nonumber\\
 &\text{ $V_t$ is indep.  of $X_1$ and $W_s=(W_{1,s}\T, W_{2,s}\T)\T$,} \: s = 1,  \ldots, t . \label{real_1_ane}
\end{align} 
Moreover,  \\
(i) the test channel, denoted by  ${\bf P}_{Y_{1,t}, Y_{2,t} |Y_1^{t-1}, Y_2^{t-1},X_{1,t}, X_{2,t}}^G$,   is  parametrized by $\big(H_t, Q_{(V_{1,t},V_{2,t})} \big), t=1,\ldots, n$, and satisfies 
\begin{align} 
&{\bf P}_{Y_{1,t}, Y_{2,t} |Y_1^{t-1}, Y_2^{t-1},X_{1,t}, X_{2,t}}^G \label{rep_11} \\
&=Q_t(dy_{1,t}, dy_{2,t}|\widehat{x}_{1,t-1|t-1},\widehat{x}_{2,t-1|t-1},  x_{1,t}, x_{2,t})\nonumber 
\end{align}
(ii)  for each $t=1, \ldots, n$ the pay-off satisfies
\begin{align}
&I(X_1^t, X_2^t; Y_{1,t}, Y_{2,t}|Y_1^{t-1}, Y_2^{t-1}) \label{NRDF_M_nn_N_aa}\\
&=I(X_{1,t}, X_{2,t}; Y_{1,t}, Y_{2,t}|Y_1^{t-1}, Y_2^{t-1}) \label{NRDF_M_nn_N_ab} \\
&= I(X_{1,t}, X_{2,t}; Y_{1,t}, Y_{2,t}|Y_1^{t-1}, Y_2^{t-1},\nonumber \\
& \hst \widehat{X}_{1,t-1|t-1}, \widehat{X}_{2,t-1|t-1}  ), \label{NRDF_M_nn_N_ac}
\\
&= I(X_{1,t}, X_{2,t}; Y_{1,t}, Y_{2,t}, \widehat{X}_{1,t|t}, \widehat{X}_{2,t|t} |Y_1^{t-1}, Y_2^{t-1},\nonumber \\
 &\hst \widehat{X}_{1,t|t-1}, \widehat{X}_{2,t|t-1}  ),\label{NRDF_M_nn_N_add}\\
 &= I(X_{1,t}, X_{2,t}; Y_{1,t}, Y_{2,t} |Y_1^{t-1}, Y_2^{t-1},\nonumber \\
 & \hst \widehat{X}_{1,t-1|t-1}, \widehat{X}_{2,t-1|t-1}, \widehat{X}_{1,t|t}, \widehat{X}_{2,t|t}  )\nonumber \\
 &+ I(X_{1,t}, X_{2,t};\widehat{X}_{1,t|t}, \widehat{X}_{2,t|t} |Y_1^{t-1}, Y_2^{t-1},\nonumber \\
 &\hst \widehat{X}_{1,t-1|t-1}, \widehat{X}_{2,t-1|t-1}  )\label{NRDF_M_nn_N_ad}\\
&\geq I(X_{1,t}, X_{2,t}; \widehat{X}_{1,t|t}, \widehat{X}_{2,t|t} |Y_1^{t-1}, Y_2^{t-1},\widehat{X}_{1,t|t-1}, \widehat{X}_{2,t|t-1}  ), \label{NRDF_M_nn_N_ae}
\end{align}
and equality holds in (\ref{NRDF_M_nn_N_ae}) if
\begin{align}
& \widehat{X}_{i,t|t} ={\bf E}\Big\{X_{i,t}\Big|Y_1^{t}, Y_2^{t}\Big\}=Y_{i,t}-a.s.,  \hso i=1,2.\label{NRDF_M_nn_N_af}
\end{align}
\noi (b) Consider the realization of part (a).  A sufficient condition for (\ref{NRDF_M_nn_N_af}) to hold is,  
\begin{align}
& {\bf E}\Big\{X_t\Big|Y_1^{t}, Y_2^t\Big\}
= {\bf E}\Big\{X_t\Big|Y_1^{t-1}, Y_2^{t-1}\Big\} \nonumber \\
&+\Cov\Big(X_t,Y_t|Y_1^{t-1}, Y_2^{t-1}\Big) \Cov\Big(Y_t,Y_t|Y_1^{t-1}, Y_2^{t-1}\Big)^{\dagger}\nonumber \\
&\Big(Y_t - {\bf E}\Big\{Y_t\Big|Y_1^{t-1}, Y_2^{t-1}\Big\}\Big)
= Y_t-a.s. \label{eq_st}
\end{align}
for $ t=1,\ldots$, provided such a $\big(H_t, Q_{(V_{1,t},V_{2,t})} \big)$ exists. \\
 Moreover,  if the pseudoinverse $\Cov\Big(Y_t,Y_t|Y_1^{t-1}, Y_2^{t-1}\Big)^{\dagger}=\Cov\Big(Y_t,Y_t|Y_1^{t-1}, Y_2^{t-1}\Big)^{-1}$ i.e.,  the inverse exists,  then the following,    Conditions 1 and 2,  are sufficient for (\ref{eq_st}) to hold {for $t=1, \ldots, n$}.
\begin{align}
& (1) \;\cov\Big(X_t, Y_t | Y_1^{t-1}, Y_2^{t-1}\Big)=\cov\Big(Y_t, Y_t | Y_1^{t-1}, Y_2^{t-1}\Big)  \label{cond_1}  \\
& (2) \;{\bf E}\left\{X_t \Big| Y_1^{t-1}, Y_2^{t-1}\right\}={\bf E}\left\{Y_t \Big| Y_1^{t-1}, Y_2^{t-1}\right\}. \label{cond_2}
\end{align}
\end{theorem}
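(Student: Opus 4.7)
The plan is to attack part (a) in three stages: optimality of a jointly Gaussian test channel, canonical parametric form of that test channel, and the information identity chain (\ref{NRDF_M_nn_N_aa})--(\ref{NRDF_M_nn_N_ae}). For the first stage I would invoke the standard maximum-entropy replacement: given any admissible ${\bf P}_{Y_1^n, Y_2^n|X_1^n, X_2^n} \in {\cal Q}_{X_1^n, X_2^n}(\Delta_1,\Delta_2)$, construct the jointly Gaussian test channel with identical second-order statistics to the joint $(X_1^n, X_2^n, Y_1^n, Y_2^n)$. The MSE constraints $\frac{1}{n}{\bf E}\{\|X_{i,t}-Y_{i,t}\|^2\}\le\Delta_i$ depend only on second moments, hence remain feasible; the causality/Markov structure (C), equivalently MC1--MC4 by Lemma~\ref{equivalent_statements}, is a statement about conditional independence that is preserved under the Gaussian matching; and since Gaussians maximize differential entropy for fixed covariance, each summand $I(X_{1,t},X_{2,t};Y_{1,t},Y_{2,t}|Y_1^{t-1},Y_2^{t-1})$ does not increase. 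This shows the infimum in (\ref{NRDF_Markov_1_a}) is attained by a Gaussian element.

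For the second stage, any jointly Gaussian element of ${\cal M}_{X_1^n, X_2^n}(\Delta_1,\Delta_2)$ can be written in the linear-Gaussian form (\ref{real_1_an}) with $H_t$ and $Q_{(V_{1,t},V_{2,t})}$ free parameters and an offset $g_{t}(\cdot)$ that is a linear function of $Y_1^{t-1}, Y_2^{t-1}$. Taking conditional expectation of (\ref{real_1_an}) with respect to $(Y_1^{t-1},Y_2^{t-1})$ and using $\mathbf{E}\{X_t|Y_1^{t-1},Y_2^{t-1}\}=A_{t-1}\widehat{X}_{t-1|t-1}=\widehat{X}_{t|t-1}$ (this identity uses the Markov property (\ref{SSM_1}) together with MC4), forces the offset to have the form (\ref{real_1_c}) so that $\mathbf{E}\{Y_t|Y_1^{t-1},Y_2^{t-1}\}=\widehat{X}_{t|t-1}$. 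This yields (\ref{rep_11}) because the conditional distribution depends on the past only through the sufficient statistic $\widehat{X}_{t-1|t-1}$. For the information identities, (\ref{NRDF_M_nn_N_aa})=(\ref{NRDF_M_nn_N_ab}) reuses the Markov identity of Theorem~\ref{IS_general}; (\ref{NRDF_M_nn_N_ab})=(\ref{NRDF_M_nn_N_ac}) is the trivial fact that $\widehat{X}_{t-1|t-1}$ is a deterministic function of $Y^{t-1}$; (\ref{NRDF_M_nn_N_add}) inserts $\widehat{X}_{t|t}$ (a function of $Y^t$) and $\widehat{X}_{t|t-1}$ (a function of $Y^{t-1}$) into the conditioning without changing the value; (\ref{NRDF_M_nn_N_ad}) is a direct chain-rule split; (\ref{NRDF_M_nn_N_ae}) follows by dropping the nonnegative first summand; finally equality in (\ref{NRDF_M_nn_N_ae}) holds under (\ref{NRDF_M_nn_N_af}) because then $(\widehat{X}_{1,t|t},\widehat{X}_{2,t|t})=(Y_{1,t},Y_{2,t})$-a.s.\ makes the dropped summand $I(X_{1,t},X_{2,t};Y_{1,t},Y_{2,t}|\widehat{X}_{1,t|t},\widehat{X}_{2,t|t},Y_1^{t-1},Y_2^{t-1},\widehat{X}_{1,t-1|t-1},\widehat{X}_{2,t-1|t-1})$ vanish.

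For part (b), jointness of $(X_t, Y_1^t, Y_2^t)$ as a Gaussian vector (which follows from part (a) together with (\ref{SSM_1}) and (\ref{real_1_an})) permits the orthogonality-principle representation of the conditional mean as the linear regression of $X_t$ on $Y_t$ conditioned on $Y^{t-1}$, namely
\begin{align*}
\mathbf{E}\{X_t|Y_1^t, Y_2^t\} &= \mathbf{E}\{X_t|Y_1^{t-1}, Y_2^{t-1}\} \\
&\quad + \cov(X_t, Y_t|Y_1^{t-1}, Y_2^{t-1})\cov(Y_t, Y_t|Y_1^{t-1}, Y_2^{t-1})^{\dagger} \\
&\quad \times \bigl(Y_t - \mathbf{E}\{Y_t|Y_1^{t-1}, Y_2^{t-1}\}\bigr),
\end{align*}
so (\ref{eq_st}) is precisely the requirement that $\widehat{X}_{t|t}=Y_t$, which is (\ref{NRDF_M_nn_N_af}). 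Under the invertibility assumption, letting $K_t\triangleq\cov(X_t,Y_t|Y_1^{t-1},Y_2^{t-1})\cov(Y_t,Y_t|Y_1^{t-1},Y_2^{t-1})^{-1}$, (\ref{eq_st}) reads $\mathbf{E}\{X_t|Y^{t-1}\}-K_t\mathbf{E}\{Y_t|Y^{t-1}\}=(I-K_t)Y_t$ almost surely, which (since the left side is $\sigma(Y^{t-1})$-measurable and the right side is nondegenerate in $Y_t$) forces $K_t=I$ and $\mathbf{E}\{X_t|Y^{t-1}\}=\mathbf{E}\{Y_t|Y^{t-1}\}$. These are exactly Conditions~1 and 2, and conversely they together imply (\ref{eq_st}).

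The main obstacle I anticipate is the careful verification that the canonical form (\ref{real_1_an}) genuinely exhausts all Gaussian minimizers while preserving the nested conditional-independence structure (C) together with MC1--MC4; in particular, showing that reducing the conditioning from $(X_1^t,X_2^t,Y^{t-1})$ to $(X_t,\widehat{X}_{t-1|t-1})$ in (\ref{rep_11}) is without loss requires the combined use of the source Markov property (\ref{ms_1}), the linearity of (\ref{real_1_an}), and the Gaussianity of innovations, and this bookkeeping is what ultimately glues (\ref{NRDF_M_nn_N_ab})--(\ref{NRDF_M_nn_N_ac}) together.
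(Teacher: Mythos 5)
Your overall route coincides with the paper's: Gaussian optimality of the test channel by second-moment matching (a step the paper itself delegates to the cited prior work on the single-source Gaussian NRDF), a linear conditionally Gaussian parametric realization, the chain-rule/nonnegativity argument for the string (\ref{NRDF_M_nn_N_aa})--(\ref{NRDF_M_nn_N_ae}) with equality under (\ref{NRDF_M_nn_N_af}), and, for part (b), the Gaussian linear-regression representation of ${\bf E}\{X_t|Y_1^t,Y_2^t\}$ with Conditions 1 and 2 read off under invertibility. Those portions are sound, and your observation that (\ref{eq_st}) in fact \emph{forces} Conditions 1 and 2 (not merely follows from them) is a harmless strengthening of what the paper claims.

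The one step you have not actually established is the specific form (\ref{real_1_c}) of the offset. You assert that taking the conditional expectation of (\ref{real_1_an}) given $(Y_1^{t-1},Y_2^{t-1})$ ``forces'' $g_{i,t}$ to equal (\ref{real_1_c}) ``so that'' ${\bf E}\{Y_t|Y_1^{t-1},Y_2^{t-1}\}=\widehat{X}_{t|t-1}$; but this presupposes the very conditional-mean matching you need to justify. A general Gaussian element of the feasible set carries an arbitrary linear offset, and neither the causality condition nor the marginal constraint pins it down. The missing argument is the optimality one the paper supplies: for fixed $\big(H_t,Q_{(V_{1,t},V_{2,t})}\big)$, the payoff ${\bf E}\big\{\log\big({\bf P}_{Y_{1,t},Y_{2,t}|Y_1^{t-1},Y_2^{t-1},X_{1,t},X_{2,t}}/{\bf P}_{Y_{1,t},Y_{2,t}|Y_1^{t-1},Y_2^{t-1}}\big)\big\}$ is invariant under adding any measurable function of $(Y_1^{t-1},Y_2^{t-1})$ to $Y_t$ (the shift cancels in numerator and denominator), whereas the square-error distortions ${\bf E}\{\|X_{i,t}-Y_{i,t}\|^2\}$ are minimized over such shifts precisely by the conditional-mean choice (\ref{real_1_c}); hence that choice is without loss of optimality. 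With this piece inserted, your proof matches the paper's.
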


\ \

\begin{proof} 
(a) The fact that a jointly Gaussian distribution is optimal,  is shown  similar to the classical RDF of Gaussian random processes with square error fidelity, and follows from \cite{stavrou-charalambous-charalambous-loyka2018siam}.   Hence,  the test channel distribution is conditionally Gaussian,  i.e.,  ${\bf P}_{Y_{1,t}, Y_{2,t}|Y_1^{t-1},Y_2^{t-1},X_{1,t},X_{2,t}}={\bf P}_{Y_{1,t}, Y_{2,t}|Y_1^{t-1},Y_2^{t-1},X_{1,t},X_{2,t}}^G$,  with linear conditional mean and nonrandom conditional covariance.   Such a distribution is induced by the parametric realization (\ref{real_1_an}) with linear $g_{i,t}(\cdot), i=1,2$.  (\ref{real_1_c}) follows from the  joint NRDF  given by  (\ref{NRDF_Markov_1_a}),  because for each $t$, the pay-off ${\bf E}\Big\{   \log \Big( \frac{{\bf P}_{Y_{1,t}, Y_{2,t}|Y_1^{t-1},Y_2^{t-1},X_{1,t},X_{2,t}}}{{\bf P }_{Y_{1,t},Y_{2,t} |Y_1^{t-1}, Y_2^{t-1}} }\Big)  \Big\}$ does not depend on $g_{i,t}(\cdot)$,  and   the average distortions $
{\bf E}\Big\{\sum_{t=1}^n ||X_{i,t}-Y_{i,t}||_{{\mathbb R}^{p_i}}^2\Big\}, i=1,2$ is minimized by $g_{i,t}(\cdot)$ given by (\ref{real_1_c}).   (i) The test channel distribution   (\ref{rep_11}) follows from the realization.   (ii) Equalities (\ref{NRDF_M_nn_N_ab})-(\ref{NRDF_M_nn_N_add}) follow from the realization and properties of conditional mutual information,  and the equality (\ref{NRDF_M_nn_N_ad})  follows from the chain rule of conditional mutual information \cite{pinsker:1964}.    Inequality (\ref{NRDF_M_nn_N_ae})  is due to the nonnegative property of conditional mutual information 
\begin{align}
 & I(X_{1,t}, X_{2,t}; Y_{1,t}, Y_{2,t} |Y_1^{t-1}, Y_2^{t-1},\nonumber \\
 & \hst \widehat{X}_{1,t-1|t-1}, \widehat{X}_{2,t-1|t-1}, \widehat{X}_{1,t|t}, \widehat{X}_{2,t|t}  )\geq 0.\label{mi_z}
\end{align}
 Moreover,  if  (\ref{NRDF_M_nn_N_af}) holds, then     the value of the  left hand side of (\ref{mi_z}) is zero, and the   inequality (\ref{NRDF_M_nn_N_ae}) holds with equality.  
  (b)    Since $(X_1^n, X_2^n, Y_1^n, Y_2^n)$ is jointly Gaussian, by mean-square estimation theory follows that if (\ref{eq_st}) holds then (\ref{NRDF_M_nn_N_af}) holds.      If the stated  inverse exists then (\ref{cond_1}),  (\ref{cond_2}) imply (\ref{eq_st}).
\end{proof}

Next,   we establish existence of the tuple $(H, Q_{(V_{1,t}, V_{2,t})})$ such that   equality holds in (\ref{NRDF_M_nn_N_af}), which is essential to   characterize the joint NRDF. \\

\begin{theorem} \label{thm:opt} 
Consider the joint NRDF $R_{X_1^n,X_2^n}(\Delta_1, \Delta_2)$ of (\ref{NRDF_A})   for the  tuple of  multivariate  Gaussian Markov process
of Definition~\ref{def:gmp}, with individual distortion criteria,  $d_{i,n}(x_1^n,y_1^n) \tri \frac{1}{n} \sum_{t=1}^n ||x_{1,t}-y_{{1},t}||_{{\mathbb R}^{p_i}}^2, i=1,2$. \\
The following hold.\\
(a) The  optimal test channel distribution of the joint NRDF is   conditionally Gaussian,  ${\bf P}_{Y_{1,t}, Y_{2,t}|Y_{1, {t-1}},Y_{2, {t-1}},X_{1,t},X_{2,t}}^G$,    induced by   $(X_1^n, X_2^n)$ and    the realization, for $t=1, \ldots, n$,  
 \begin{align}
\begin{pmatrix}
Y_{1,t} \\Y_{2,t}
\end{pmatrix}=&H_t \begin{pmatrix}
X_{1,t} \\X_{2,t}
\end{pmatrix}  +  \Big(I_{p_1+p_2}-H_t\Big) A_{t-1} \begin{pmatrix}Y_{1,t-1}\\ Y_{2, t-1} \end{pmatrix} \nonumber \\
&+\begin{pmatrix}
V_{1,t} \\V_{2,t}   \label{real_f}
\end{pmatrix}\\
 =&  H_tX_t+ \Big(I_{p_1+p_2}-H_t\Big)A_{t-1}Y_{t-1} + V_t
\end{align}
where the matrices, $(H_t,Q_{(V_{1,t},V_{2,t})})$ satisfy, 
\begin{align}
 &H_t \Sigma_{(E_{1,t}^-,E_{2,t}^-)} =\Sigma_{(E_{1,t}^-,E_{2,t}^-)}  - \Sigma_{(E_{1,t},E_{2,t})} \label{real_f1}\\
 &\hspace{2.0cm}=\Sigma_{(E_{1,t}^-,E_{2,t}^-)}  H_t\T \succeq 0,  \nonumber  \\
 &Q_{(V_{1,t},V_{2,t})}= H_t \Sigma_{(E_{1,t}^-,E_{2,t}^-)}  -H_t \Sigma_{(E_{1,t}^-,E_{2,t}^-)} H_t\T \succeq 0, 
\nonumber \\
&\Sigma_{(E_{1,t}^-, E_{2, t}^-)}=A_{t-1}\Sigma_{(E_{1,t-1}, E_{2, t-1})} A_{t-1}\T \nonumber \\
& \hspace{1.0cm} + B_{t-1} Q_{(W_{1,t}, W_{2,t})}B_{t-1}\T, \hso t=2, \ldots, n  \label{real_f3}\\
&\Sigma_{(E_{1,1}^-, E_{2, 1}^-)}=Q_{(X_{1,1}, X_{2,1})}.\label{real_f4}
\end{align}
If $Q_{(X_{1,1}, X_{2,1})}$ and  $B_{t-1} Q_{(W_{1,t}, W_{2,t})}B_{t-1}\T,  t=2, \ldots, n$ are full rank matrices then
\begin{align}
&Q_{(V_{1,t},V_{2,t})}=\Sigma_{(E_{1,t}E_{2,t})}\nonumber \\
&\hso  -\Sigma_{(E_{1,t}E_{2,t})} \Big(\Sigma_{(E_{1,t}^-,E_{2,t}^-)} \Big)^{-1}\Sigma_{(E_{1,t}E_{2,t})} \succeq 0. 
\end{align}

(b) The characterization of  the NRDF is given by 
\begin{align}
&R_{X_1^n,X_2^n}(\Delta_1, \Delta_2) =\inf_{{\cal M}_{X_1^n,X_2^n}^G(\Delta_1, \Delta_2)}\Big\{ \nonumber \\
& \hst   \sum_{t=1}^n I(X_{1,t}, X_{2,t}; Y_{1,t}, Y_{2,t}|Y_{1,t-1}, Y_{2, t-1})\Big\}   \\
 & =\inf_{{\cal M}_{X_1^n,X_2^n}^{G}(\Delta_1, \Delta_2)} \Big\{  \frac{1}{2} \sum_{t=1}^n  \log \Big(\frac{|\Sigma_{(E_{1,t}^-, E_{2, t}^-)}|}{|\Sigma_{(E_{1,t}, E_{2,t})}|}\Big)\Big\}\label{cost_f}
\end{align}
where the constraint set is 
\begin{align}
&{\cal M}_{X_1^n,X_2^n}^{G}(\Delta_1, \Delta_2) \tri \Big\{ \Sigma_{(E_{1,t}, E_{2,t})} \in {\cal S}_+^{p\times p}, \; t=1, \ldots, n \Big| \nonumber \\
& \Sigma_{(E_{1,t}, E_{2,t})}    \preceq \Sigma_{(E_{1,t}^-, E_{2,t}^-)},   t=1, \ldots, n,   \mbox{ (\ref{real_f3}),  (\ref{real_f4}) } \nonumber \\
 &   \frac{1}{n}\sum_{t=1}^n \trace(\Sigma_{E_{1,t}}) \leq \Delta_1, \;  \frac{1}{n}\sum_{t=1}^n \trace(\Sigma_{E_{2,t}}) \leq \Delta_2  \Big\}. 
\end{align} 
\end{theorem}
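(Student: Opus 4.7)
The plan is to leverage Theorem~\ref{thm:suffjoint} directly and then peel off the Gaussian-Markov structure by induction on $t$. From Theorem~\ref{thm:suffjoint}(a) we already know the minimizer is conditionally Gaussian with realization (\ref{real_1_an}) and mean-correction (\ref{real_1_c}), so what remains is (i) to simplify $g_{i,t}(Y_1^{t-1},Y_2^{t-1})$ into a feedback of $Y_{t-1}$ alone, (ii) to pin down $(H_t,Q_{(V_{1,t},V_{2,t})})$ via the sufficient Conditions~(\ref{cond_1})--(\ref{cond_2}), and (iii) to compute the information-theoretic cost as a log-determinant.

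First I would set up an induction that $Y_{t-1}=\widehat{X}_{t-1|t-1}$ a.s., the base step being vacuous for $t=1$. Using the Markov source $X_t=A_{t-1}X_{t-1}+B_{t-1}W_t$ with $W_t$ independent of $Y^{t-1}$, I get $\widehat{X}_{t|t-1}=A_{t-1}\widehat{X}_{t-1|t-1}=A_{t-1}Y_{t-1}$, and substituting into (\ref{real_1_an})--(\ref{real_1_c}) collapses the affine term to $(I-H_t)A_{t-1}Y_{t-1}$, giving exactly (\ref{real_f}). The companion innovation recursion $E_t^- =A_{t-1}E_{t-1}+B_{t-1}W_t$, with $E_{t-1}$ a function of $X^{t-1}$ and hence independent of $W_t$, yields (\ref{real_f3})--(\ref{real_f4}) immediately.

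Next I would enforce $Y_t=\widehat{X}_{t|t}$ via the two sufficient conditions of Theorem~\ref{thm:suffjoint}(b). Using $Y_t-\widehat{Y}_{t|t-1}=H_tE_t^- +V_t$ (from (\ref{real_f})) one obtains
\begin{align*}
\cov(Y_t,Y_t|Y^{t-1}) &= H_t\Sigma_{(E_{1,t}^-,E_{2,t}^-)}H_t\T + Q_{(V_{1,t},V_{2,t})},\\
\cov(X_t,Y_t|Y^{t-1}) &= \Sigma_{(E_{1,t}^-,E_{2,t}^-)}H_t\T .
\end{align*}
Condition~(\ref{cond_2}) holds automatically under the inductive hypothesis because both sides reduce to $A_{t-1}Y_{t-1}$; imposing Condition~(\ref{cond_1}) then gives the second equality of (\ref{real_f1}). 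The first equality, $H_t\Sigma_{E^-}=\Sigma_{E^-}H_t\T=\Sigma_{E^-}-\Sigma_E$, follows by observing that once $Y_t=\widehat{X}_{t|t}$ the Kalman gain is the identity, so $E_t=X_t-Y_t=(I-H_t)E_t^- - V_t$ and the standard MSE update $\Sigma_E=\Sigma_{E^-}-H_t\Sigma_{E^-}$ appears, forcing the symmetry $H_t\Sigma_{E^-}=\Sigma_{E^-}H_t\T$. When $\Sigma_{E^-}$ is invertible one inverts $H_t=I-\Sigma_E\Sigma_{E^-}^{-1}$ and substitutes into $Q_V=H_t\Sigma_{E^-}(I-H_t\T)$ to obtain the closed form $Q_V=\Sigma_E-\Sigma_E\Sigma_{E^-}^{-1}\Sigma_E$, completing part~(a).

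For part~(b) I would use the chain rule together with the Gaussianity from (a): the sufficiency of Theorem~\ref{thm:jointlower}(c) and identity (\ref{NRDF_M_nn_N_ae}) with equality reduce each summand to
\[
I(X_{1,t},X_{2,t};Y_{1,t},Y_{2,t}|Y^{t-1})=h(X_t|Y^{t-1})-h(X_t|Y^t)=\tfrac{1}{2}\log\frac{|\Sigma_{(E_{1,t}^-,E_{2,t}^-)}|}{|\Sigma_{(E_{1,t},E_{2,t})}|},
\]
since $X_t|Y^{t-1}\sim G(\widehat{X}_{t|t-1},\Sigma_{E^-})$ and $X_t|Y^{t}\sim G(\widehat{X}_{t|t},\Sigma_{E})$. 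Summing over $t$ gives (\ref{cost_f}). The distortion constraints become $\tfrac{1}{n}\sum_t\trace(\Sigma_{E_{i,t}})\leq\Delta_i$ because $Y_{i,t}$ realizes the MMSE; the inequality $\Sigma_{(E_{1,t},E_{2,t})}\preceq\Sigma_{(E_{1,t}^-,E_{2,t}^-)}$ is forced by $H_t\Sigma_{E^-}\succeq 0$ from part~(a). The only delicate point I anticipate is the rank/pseudoinverse bookkeeping: when $\Sigma_{E^-}$ is rank-deficient the gain equation $H_t\Sigma_{E^-}=\Sigma_{E^-}-\Sigma_E$ does not pin $H_t$ uniquely, and I would have to argue that any solution on $\mathrm{range}(\Sigma_{E^-})$ suffices, with $Q_V$ supported accordingly so that $(H_t,Q_V)\succeq 0$ is consistent — this is exactly why the closed-form $Q_V$ expression in (a) is stated only under the full-rank hypothesis on $Q_{(X_{1,1},X_{2,1})}$ and $B_{t-1}Q_{(W_{1,t},W_{2,t})}B_{t-1}\T$.
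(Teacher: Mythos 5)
Your proposal is correct and follows essentially the same route as the paper: invoke Theorem~\ref{thm:suffjoint}(a) for conditional Gaussianity of the optimal test channel, impose the sufficient conditions (\ref{cond_1})--(\ref{cond_2}) of Theorem~\ref{thm:suffjoint}(b) to derive the equations for $(H_t,Q_{(V_{1,t},V_{2,t})})$, and evaluate the conditional mutual information as a ratio of log-determinants of the prediction and filtering error covariances. The paper states these steps tersely; you supply the induction $Y_{t-1}=\widehat{X}_{t-1|t-1}$ and the covariance algebra explicitly, but the underlying argument is the same.
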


\ \

\begin{proof} First, note that by Theorem~\ref{thm:suffjoint}.(a),   the joint NRDF is also expressed as 
\begin{align}
&R_{X_1^n,X_2^n}(\Delta_1, \Delta_2) = \inf_{{\cal M}_{X_1^n,X_2^n}^G(\Delta_1, \Delta_2)}\Big\{ \\
&\sum_{t=1}^n I(X_{1,t}, X_{2,t}; Y_{1,t}, Y_{2,t}|Y_1^{t-1}, Y_2^{t-1})\Big\} \nonumber  \\
=\;&\inf_{ {\cal M}_{X_1^n,X_2^n}^G(\Delta_1,\Delta_2)}    
 {\bf E}\Big\{ \sum_{t=1}^n   \log \Big( \frac{{\bf P}_{X_{1,t}, X_{2,t}|Y_1^{t-1},Y_2^{t-1}}^G}{{\bf P }_{X_{1,t},X_{2,t} |Y_1^{t}, Y_2^{t}}^G }\Big)  \Big\} \label{NRDF_Markov_1_a_g} 
\end{align}
where ${\cal M}_{X_1^n,X_2^n}^G(\Delta_1, \Delta_2)$ is the subset of ${\cal M}_{X_1^n,X_2^n}(\Delta_1, \Delta_2)$ defined by (\ref{con}),  generated by jointly Gaussian distributions ${\bf P}_{X_1^t,X_2^t, Y_1^t, Y_2^t}^G$, $t=1, \ldots, n$ and ${\bf P}_{X_{1,t}, X_{2,t}|Y_1^{t-1},Y_2^{t-1}}^G$,  ${\bf P }_{X_{1,t},X_{2,t} |Y_1^{t}, Y_2^{t}}^G$ denote conditionally Gaussian  distributions,  obtained from the realization (\ref{real_1_an})-(\ref{real_1_ane}).  By properties of jointly Gaussian random processes,  then
\begin{align*}
&\Cov\Big(X_t,X_t\Big|Y^{t}\Big)={\bf E}\Big\{E_{t} \big(E_{t}\big)\T\Big\}=\Sigma_{(E_{1,t} E_{2,t})}\\
&\Cov\Big(X_t,X_t\Big|Y^{t-1}\Big)= {\bf E}\Big\{E_{t}^- \big(E_{t}^-\big)\T\Big\}=\Sigma_{(E_{1,t}^-, E_{2,t}^-)}
\end{align*}
Clearly,  $\Sigma_{(E_{1,t}^-, E_{2,t}^-)}$ is given by (\ref{real_f3}).      \\
(a) Realization   (\ref{real_f}) and specifically,  (\ref{real_f1})-(\ref{real_f4}) are obtained by    applying Theorem~\ref{thm:suffjoint}.(b)  so that (\ref{NRDF_M_nn_N_af}) holds. The conditions of  Theorem~\ref{thm:suffjoint}.(b) give rise to the equations of  $(H_t,Q_{(V_{1,t},V_{2,t})})$ as specified. 
(b) This  follows directly by using the realization of part  (a) to calculate (\ref{NRDF_Markov_1_a_g}).  
\end{proof}
 
\ \

The next theorem gives  the Kuhn-Tucker conditions of the optimization problem of  Theorem~\ref{thm:opt}.(b). \\

\begin{theorem}
\label{thm:char}
Consider $R_{X_1,X_2}(\Delta_1,\Delta_2)$ of Theorem~\ref{thm:opt}.(b), defined by  (\ref{cost_f}) and assume   $\overline{Q}_{t}\tri B_{t}Q _{(W_{1,t+1}, W_{2,t}+1)}B_{t}\T\succ 0$, and  $R_{X_1^n,X_2^n}(\Delta_1,\Delta_2)< +\infty$. The Lagrange functional is, 
\begin{align}
&{\cal L}\tri  \sum_{t=1}^n \bigg\{   \frac{1}{2}\log\Big( \frac{| \Sigma_{(E_{1,t}^-,E_{2,t}^-)}|}{|\Sigma_{(E_{1,t},E_{2,t}}\big)| } \Big)\nonumber \\
& + \trace \Big (\Theta_t \Big( \Sigma_{(E_{1,t},E_{2,t})} - \Sigma_{(E_{1,t}^-,E_{2,t}^-)} \Big)\Big )   -\trace \Big (V_t\Sigma_{(E_{1,t},E_{2,t})} \Big )\bigg\} \nonumber \\
&+ \lambda_{1} \Big (\sum_{t=1}^n \trace \Big (\Sigma_{E_{1,t}}\Big )  - n \Delta_{1} \Big ) + \lambda_2 \Big (\sum_{t=1}^n \trace \Big (\Sigma_{E_{2,t}}\Big )  - n\Delta_{2} \Big )   \nonumber
\end{align}
where $ \Theta_t \succeq 0$, $V_t \succeq 0$, $\lambda_i \in [0,\infty), i=1,2$. \\
The optimal $\{\Sigma_{(E_{1,t}, E_{2,t})}: t=1, \ldots, n\}  \in {\cal M}_{X_1^n,X_2^n}^{G}(\Delta_1, \Delta_2)$ for $ R_{X_1^n,X_2^n}(\Delta_{1},\Delta_{2})$ is found as follows.  \\
(i) Stationarity:
\begin{align}
-\frac{1}{2} \Sigma_{(E_{1,t},E_{2,t})}^{-1}  +  \begin{bmatrix}
\lambda_1 I_{p_1} & 0 \\ 0 & \lambda_2 I_{p_2}
\end{bmatrix} + \Theta_t +V_t=0 . \label{eq:ErrorCovLag}
\end{align}
(ii) Complementary Slackness:
\begin{align}
&\lambda_1 \Big (\sum_{t=1}^n \trace \Big (\Sigma_{E_{1,t}}\Big )  - n \Delta_{1} \Big ) = 0,  \\
& \lambda_2 \Big (\sum_{t=1}^n \trace \Big (\Sigma_{E_{2,t}}\Big )  -n \Delta_{2} \Big ) = 0,  \label{eq:CompSlackLambda} \\
&\trace \Big (V_t\Sigma_{(E_{1,t},E_{2,t})}\Big ) = 0, \hso t=1, \ldots, n, \\
& \trace \Big (\Theta_t \Big( \Sigma_{(E_{1,t},E_{2,t})} - \Sigma_{(E_{1,t}^-,E_{2,t}^-)} \Big)\Big ) = 0.  \label{eq:CompSlackTheta}
\end{align}
(iii) Primal Feasibility: Defined  by ${\cal M}_{X_1^n,X_2^n}^{G}(\Delta_1, \Delta_2$.\\
(iv) Dual Feasibility: $\lambda_1 \geq 0, \hso \lambda_2 \geq 0, \hso \Theta_t \succeq 0, \hso V_t \succeq 0, t=1, \ldots, n$.\\ Moreover, the following hold.\\
(a)  $V_t=0$ for $t=1, \ldots, n$ and   
\begin{align}
&\mbox{For $t=n$}: \nonumber \\
 &\Sigma_{(E_{1,n},E_{2,n})}  = \frac{1}{2}\Bigg ( \begin{bmatrix}
\lambda_1 I_{p_1} & 0 \\ 0 & \lambda_2 I_{p_2}
\end{bmatrix} + \Theta_n \Bigg)^{-1} \succ 0 . \label{eq:ErrorCovLag_new}\\
&\mbox{For $t=n-1, \ldots, 1$}: \nonumber \\
&\Sigma_{(E_{1,t},E_{2,t})}+ \Sigma_{(E_{1,t},E_{2,t})}\overline{Q}_t \Sigma_{(E_{1,t},E_{2,t})}\nonumber \\
& -\frac{1}{2}\Bigg ( \begin{bmatrix}
\lambda_1 I_{p_1} & 0 \\ 0 & \lambda_2 I_{p_2}
\end{bmatrix} + \Theta_t- A_t\T \Theta_{t+1}A_t \Bigg)^{-1}=0 \label{eq:ErrorCovLag_new_a}
\end{align}
(b) If  $\Sigma_{(E_{1,t}^-,E_{2,t}^-)} - \Sigma_{(E_{1,t},E_{2,t})}  \succ 0$ for all $t=1, \ldots, n$  
then $\Theta_t=0$ for all $t=1, \ldots, n$,  and

\end{theorem}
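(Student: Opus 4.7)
The plan is to verify the KKT conditions for the convex program in Theorem~\ref{thm:opt}(b) by forming the Lagrangian $\mathcal{L}$ as stated, differentiating with respect to each matrix variable $\Sigma_{(E_{1,t},E_{2,t})}$, and then exploiting complementary slackness and positive-definiteness arguments to simplify. The Lagrangian encodes the PSD constraint $\Sigma_{(E_{1,t},E_{2,t})} \succeq 0$ through $V_t \succeq 0$, the upper bound $\Sigma_{(E_{1,t},E_{2,t})} \preceq \Sigma_{(E_{1,t}^-,E_{2,t}^-)}$ through $\Theta_t \succeq 0$, and the two average distortion constraints through $\lambda_1,\lambda_2 \geq 0$. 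The block-diagonal form $\diag(\lambda_1 I_{p_1}, \lambda_2 I_{p_2})$ appearing in (\ref{eq:ErrorCovLag}) comes from writing $\trace(\Sigma_{E_{i,t}}) = \trace(J_i \Sigma_{(E_{1,t},E_{2,t})})$ with block projectors $J_i$, and the recursion $\Sigma_{(E_{1,t+1}^-,E_{2,t+1}^-)} = A_t \Sigma_{(E_{1,t},E_{2,t})} A_t\T + \overline{Q}_t$ from (\ref{real_f3}) is substituted in so that $\{\Sigma_{(E_{1,t},E_{2,t})}\}$ are the only primal variables.

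Next I would compute $\partial \mathcal{L}/\partial \Sigma_{(E_{1,t},E_{2,t})}$ using the standard matrix calculus identities $\partial \log|\Sigma|/\partial \Sigma = \Sigma^{-1}$ and $\partial \trace(M\Sigma)/\partial \Sigma = M$ for symmetric $\Sigma,M$. Because $\Sigma_{(E_{1,t+1}^-,E_{2,t+1}^-)}$ depends on $\Sigma_{(E_{1,t},E_{2,t})}$ through the congruence $A_t(\cdot)A_t\T$, both the log-determinant term at time $t+1$ and the multiplier $\Theta_{t+1}$ contribute extra pieces $\frac{1}{2}A_t\T (\Sigma_{(E_{1,t+1}^-,E_{2,t+1}^-)})^{-1}A_t$ and $-A_t\T \Theta_{t+1} A_t$ whenever $t<n$; these vanish at $t=n$, recovering exactly (\ref{eq:ErrorCovLag}). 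Complementary slackness then yields (ii), while primal and dual feasibility are built into the constraint set and multiplier signs respectively.

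For (a), I would first argue that finiteness of $R_{X_1^n,X_2^n}(\Delta_1,\Delta_2)$ forces $\Sigma_{(E_{1,t},E_{2,t})} \succ 0$ for every $t$, since otherwise the log-det ratio in the pay-off diverges given $\overline{Q}_t \succ 0$. Combined with $V_t \succeq 0$ and the slackness $\trace(V_t \Sigma_{(E_{1,t},E_{2,t})})=0$, a spectral decomposition of $V_t$ against the strictly positive $\Sigma_{(E_{1,t},E_{2,t})}$ then forces $V_t=0$. For $t=n$ the simplified stationarity rearranges directly into (\ref{eq:ErrorCovLag_new}). For $t<n$, substitute the recursion for $\Sigma_{(E_{1,t+1}^-,E_{2,t+1}^-)}$ into the full stationarity; the Woodbury identity (applicable because $\overline{Q}_t\succ 0$) eliminates the inverse $(\Sigma_{(E_{1,t+1}^-,E_{2,t+1}^-)})^{-1}$ and, after pre- and post-multiplying by $\Sigma_{(E_{1,t},E_{2,t})}$, yields the Riccati-like identity (\ref{eq:ErrorCovLag_new_a}).

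For (b), assume $\Sigma_{(E_{1,t}^-,E_{2,t}^-)} - \Sigma_{(E_{1,t},E_{2,t})} \succ 0$ and spectrally decompose $\Theta_t = \sum_i d_i v_i v_i\T$ with $d_i\geq 0$; the slackness (\ref{eq:CompSlackTheta}) then reads $\sum_i d_i\, v_i\T(\Sigma_{(E_{1,t}^-,E_{2,t}^-)} - \Sigma_{(E_{1,t},E_{2,t})}) v_i = 0$, and since strict positive-definiteness makes every quadratic form $v_i\T(\cdot)v_i$ strictly positive, each $d_i$ must vanish and hence $\Theta_t=0$. The main obstacle I anticipate is the algebraic step producing (\ref{eq:ErrorCovLag_new_a}) for $t<n$: careful Woodbury bookkeeping is required to regroup the interaction between $A_t$ and $\overline{Q}_t$ and deliver the precise quadratic form $\Sigma+\Sigma\overline{Q}_t\Sigma$ (modulo the $A_t$ pullback that must be tracked consistently with the corresponding term $A_t\T\Theta_{t+1}A_t$ in the Lagrange multiplier block). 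The remaining KKT items are routine matrix calculus and standard complementary-slackness arguments.
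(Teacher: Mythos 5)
Your KKT route is essentially the only route here, and it supplies exactly what the paper omits: the paper's entire proof of this theorem is the one-line citation to \cite[Theorem~5.3]{stavrou-charalambous-charalambous-loyka2018siam}, so there is no written derivation to compare against beyond that deferral. The structural choices you make are the right ones and match what that cited machinery does: substituting the recursion (\ref{real_f3}) so that only $\{\Sigma_{(E_{1,t},E_{2,t})}\}$ are primal variables, observing that for $t<n$ both the $(t{+}1)$ log-determinant and the multiplier $\Theta_{t+1}$ contribute $A_t$-pulled-back terms (which is precisely why $A_t\T\Theta_{t+1}A_t$ appears in (\ref{eq:ErrorCovLag_new_a}) while (\ref{eq:ErrorCovLag}) is literally valid only at $t=n$), the argument that finiteness forces $\Sigma_{(E_{1,t},E_{2,t})}\succ 0$ and hence $V_t=0$ from $\trace(V_t\Sigma_{(E_{1,t},E_{2,t})})=0$, and the spectral argument for part (b). All of that is correct.

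One concrete warning about the step you yourself flag as the main obstacle. Carrying the Woodbury elimination through with $\Sigma_{(E_{1,t+1}^-,E_{2,t+1}^-)}=A_t\Sigma_{(E_{1,t},E_{2,t})}A_t\T+\overline{Q}_t$ gives the identity $\Sigma^{-1}-A_t\T\big(A_t\Sigma A_t\T+\overline{Q}_t\big)^{-1}A_t=\big(\Sigma+\Sigma A_t\T\overline{Q}_t^{-1}A_t\Sigma\big)^{-1}$ (writing $\Sigma$ for $\Sigma_{(E_{1,t},E_{2,t})}$), so the quadratic term that actually emerges is $\Sigma A_t\T\overline{Q}_t^{-1}A_t\Sigma$, not $\Sigma\overline{Q}_t\Sigma$ as displayed in (\ref{eq:ErrorCovLag_new_a}). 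A scalar Gauss--Markov sanity check confirms this: the stationarity condition there is $D+\tfrac{a^2}{q}D^2=\tfrac{1}{2\lambda}$, not $D+qD^2=\tfrac{1}{2\lambda}$. So if you execute your plan faithfully you will not land on the equation as printed; state the corrected form rather than forcing the algebra to match. (A second, immaterial nit: with the Lagrangian written as $-\trace(V_t\Sigma_{(E_{1,t},E_{2,t})})$ the stationarity condition should carry $-V_t$; since you prove $V_t=0$ this does not affect anything downstream.)
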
 
\begin{proof} The derivation is utilizes \cite[Theorem~5.3]{stavrou-charalambous-charalambous-loyka2018siam}.
\end{proof}

\ \

\begin{example} {To illustrate fundamental challenges, we consider $X_{i,t}: \Omega \rightarrow \mathbb{R},\; i=1,2$ and $Y_{i,t}: \Omega \rightarrow \mathbb{R},\; i=1,2$ for $t=1,\dots,n$. For simplicity, assume  $\overline{Q}_t = \diag(q_1,q_2) \succ 0$, i.e., $q_1>0,\;q_2>0$, and \small $A_t = A = \begin{bmatrix}
a_{11} & a_{12}\\ a_{21} & a_{22}
\end{bmatrix}t=0,\dots,n$, and $X_{1,1}, X_{2,1}$ are independent with variances $\sigma_{1}^2, \sigma_2^2$, respectively. \normalsize By Theorem \ref{thm:char} item (b),} {we have:
For $t=n$:}

\begin{align}
\Sigma_{(E_{1,n},E_{2,n})}  = \begin{bmatrix}
 \frac{1}{2\lambda_1} & 0 \\ 0 & \frac{1}{2\lambda_2}
\end{bmatrix}, 
& \Sigma_{E_{1,n},E_2,n} = 0.
\end{align}
{For  $t=1, \ldots, n-1$:}
\begin{align}
&\Sigma_{(E_{1,t},E_{2,t})}+ \Sigma_{(E_{1,t},E_{2,t})}\overline{Q}_t \Sigma_{(E_{1,t},E_{2,t})} -\begin{bmatrix}
 \frac{1}{2\lambda_1} & 0 \\ 0 & \frac{1}{2\lambda_2}
\end{bmatrix}=0 \nonumber
\end{align}
{The set of equations for $t=1, \ldots, n-1$,  are }
\begin{align}
& \Sigma_{E_1,t}+ \Sigma_{E_1,t}^2 q_1 + {\Sigma_{E_{1,t},E_2,t}^2} q_2 - \frac{1}{2\lambda_1} = 0\label{eq:a}\\
&{\Sigma_{E_{1,t},E_2,t}}(1+\Sigma_{E_1,t} q_1 + \Sigma_{E_2,t} q_2)  = 0 \label{eq:diag}\\
&\Sigma_{E_2,t}+ \Sigma_{E_2,t}^2 q_2 + {\Sigma_{E_{1,t},E_2,t}^2} q_1 - \frac{1}{2\lambda_2} = 0\label{eq:b}
\end{align}
By \eqref{eq:diag}, $\Sigma_{E_1,E_2,t} = 0$ or $\Sigma_{E_1,t} q_1 + \Sigma_{E_2,t} q_2 = -1$. The later cannot hold because the left hand side  is always positive, hence  {$\Sigma_{E_{1,t},E_2,t} = 0$} and the optimal matrices $\Sigma_{(E_{1,t},E_{2,t})}$ are diagonal for all $t=1,\dots,n-1$. Then by \eqref{eq:a} and \eqref{eq:b} the positive solutions are
\begin{align}
\Sigma_{E_1,t} = \frac{-1+\sqrt{1+\frac{2q_1}{\lambda_1}}}{2q_1},\;\Sigma_{E_2,t} = \frac{-1+\sqrt{1+\frac{2q_2}{\lambda_2}}}{2q_2}\nonumber
\end{align}
{For $t=0,\ldots, n$ using the above values we can determine  $\lambda_i \ge 0,\; i=1,2$, using the average distortions as follows.}
\begin{align}
\frac{1}{2\lambda_i} + (n-1)\bigg (\frac{-1+\sqrt{1+\frac{2q_i}{\lambda_i}}}{2q_i}\bigg) = n\Delta_i,\hso i=1,2
\end{align}
Then by using \eqref{real_f3}, {need to determine the matrix $\Sigma_{(E_{1,t}^-, E_{2, t}^-)}$. Suppose its  diagonal entries  are $\alpha_t, \beta_t$ and its non-diagonal entry is  $\gamma_t$.  Then the equations for $\Sigma_{(E_{1,t}^-, E_{2, t}^-)}$,  for $t=2, \ldots, n$, are  given by 
\begin{align}
&\alpha_t = a_{11}^2\Sigma_{E_1,t-1} + a_{12}^2\Sigma_{E_2,t-1} + q_1, \label{1}\\
&\beta_t = a_{21}^2\Sigma_{E_1,t-1} + a_{22}^2\Sigma_{E_2,t-1} + q_2, \\
&\gamma_t = a_{11}a_{21}\Sigma_{E_1,t-1} + a_{22}a_{12}\Sigma_{E_2,t-1} \label{3}
\end{align}
For $t=1, \alpha_1=\sigma_1^2, \beta_1=\sigma_2^2, \gamma_1=0$.  }
Therefore, the joint NRDF is given by,
{
\begin{align}
&R_{X_1^n,X_2^n}(\Delta_1, \Delta_2) = \frac{1}{2} \sum_{t=1}^n  \log \Big(\frac{\alpha_t\beta_t-\gamma_t^2}{\Sigma_{E_1,t}\Sigma_{E_2,t}}\Big) \label{eq:exrdf}\\
& =  \frac{1}{2}  \log \Big(\frac{\alpha_1\beta_1-\gamma_1^2}{\Sigma_{E_1,1}\Sigma_{E_2,1}}\Big)+ \frac{(n-2)}{2} \log \Big(\frac{\alpha_t\beta_t-\gamma_t^2}{\Sigma_{E_1,t}\Sigma_{E_2,t}}\Big)\nonumber \\
&\hso+\frac{1}{2}  \log \Big(\frac{\alpha_n\beta_n-\gamma_n^2}{\Sigma_{E_1,n}\Sigma_{E_2,n}}\Big)
\end{align}
and \eqref{eq:exrdf} holds for $(\Delta_1,\Delta_2)$ such that  $\Sigma_{(E_{1,t}^-,E_{2,t}^-)} - \Sigma_{(E_{1,t},E_{2,t})}  \succ 0$ for all $t=1, \ldots, n$. The closed form calculations of this region as a function of $(\lambda_1,\lambda_2,A,\overline{Q}_t)$ are lengthy hence omitted due to space limitation.  
The per unit time limit is, then obtain from the solution of  (\ref{1})-(\ref{3}), at any $t \in \{2, \ldots, n-1\}$, i.e., 
\begin{align}
&\lim_{n \longrightarrow \infty} \frac{1}{n}R_{X_1^n,X_2^n}(\Delta_1, \Delta_2) = \frac{1}{2}\log \Big(\frac{\alpha_t\beta_t-\gamma_t^2}{\Sigma_{E_1,t}\Sigma_{E_2,t}}\Big).
\end{align}
}
\end{example}

\ \ 
 
\begin{remark} For the sub-set of the rate region for which Theorem~\ref{thm:char}.(b) holds, it is possible to compute the closed-form expression of the joint NRDF $R_{X_1^n,X_2^n}(\Delta_1, \Delta_2)$, by using (\ref{eq:ErrorCovLag_new}), (\ref{eq:ErrorCovLag_new_a}), with $\Theta_t=0, t=1, \ldots, n$. This will lead to a generalization of   the classical joint RDF given in  \cite[Theorem~III.3]{stylianou2021joint},  for  the tuple of jointly independent and identically (IID) distributed Gaussian process $(X_1^n, X_2^n)$, i.e., ${\bf P}_{X_{1,t}, X_{2,t}}={\bf P}_{X_1, X_2}, t=1, \ldots, n$, with individual distortion criteria   \cite[Theorem~III.3]{stylianou2021joint}. The case $\Sigma_{(E_{1,t},E_{2,t})} - \Sigma_{(E_{1,t}^-,E_{2,t}^-)}  \succeq 0$ but not $\Sigma_{(E_{1,t},E_{2,t})} - \Sigma_{(E_{1,t}^-,E_{2,t}^-)}  \succ 0$ is to our experience, a challenging problem, even for the simplest application example of a tuple of IID process. 
\end{remark}

%
%
%
%
\section{Conclusions and open problems}\label{sec:conclusions}
The joint nonanticpative RDF is analyzed  for a tuple of random process with  individual fidelity criteria.  Achievable lower bound and structural properties of the test channels are derived. The application example of a tuple of jointly multivariate Gaussian Markov process with two square-error fidelity criteria is analyzed.  A  fundamental open problem which is not addressed in this paper  is  the computation of the joint nonanticipative RDF of Theorem~\ref{thm:opt}. Although, for a tuple of  multivariate Gaussian Markov process  this is  challenging problem (i.e., currently the only known solution is,   for a tuple of  scalar, IID Gaussian random variables  \cite{xiao}), some progress is expected.

\bibliographystyle{ieeetr}
\bibliography{references}

%
%
%
%
\end{document}